\newtheorem{fact}{Fact}
\newcommand{\junk}[1]{}
\begin{document}
%\documentclass{cccg23}
%\usepackage{graphicx,amssymb,amsmath}

%----------------------- Macros and Definitions --------------------------

% Add all additional macros here, do NOT include any additional files.

% The environments theorem (Theorem), invar (Invariant), lemma (Lemma),
% cor (Corollary), obs (Observation), conj (Conjecture), prop
% (Proposition), and proof are already defined in the cccg19.cls file.
% Add additional environments only if you REALLY need them.

%----------------------- Title -------------------------------------------

\title{Approximate all-pairs Hamming distances and 0-1 matrix multiplication}
\author{
Miros{\l}aw Kowaluk
\inst{1}
\and
Andrzej Lingas
\inst{2}
\and
  Mia Persson
  \inst{3}}
\institute{
  Institute of Informatics, University of Warsaw, Warsaw, Poland.
  \texttt{kowaluk@mimuw.edu.pl}
\and
  Department of Computer Science, Lund University,
Lund, Sweden.
\texttt{Andrzej.Lingas@cs.lth.se}
\and
Department of Computer Science and Media Technology, Malm\"o University, Malm\"o, Sweden.
\texttt{Mia.Persson@mau.se}
}
\pagestyle{plain}

\maketitle

\begin{abstract}
  Arslan showed that computing all-pairs Hamming distances is easily
  reducible to arithmetic 0-1 matrix multiplication (IPL 2018).  We
  provide a reverse, linear-time reduction of arithmetic 0-1 matrix
  multiplication to computing all-pairs distances in a Hamming space.
  On the other hand, we present a fast randomized algorithm for
  approximate all-pairs distances in a Hamming space. By combining it
  with our reduction, we obtain also a fast randomized algorithm for
  approximate 0-1 matrix multiplication. Next, we present an
  output-sensitive randomized algorithm for a minimum spanning tree of
  a set of points in a generalized Hamming space, the lower is the
  cost of the minimum spanning tree the faster is our
  algorithm. Finally, we provide $(2+\epsilon)$- approximation
  algorithms for the $\ell$-center clustering and minimum-diameter
  $\ell$-clustering problems in a Hamming space $\{0,1\}^d$ that are
  substantially faster than the known $2$-approximation ones when both
  $\ell$ and $d$ are super-logarithmic.
\end{abstract}

\begin{keywords}
 Hamming distance, Hamming space, arithmetic matrix multiplication, minimum spanning tree, clustering, approximation algorithm
\end{keywords}
\section{Introduction}
Arslan observes in \cite{A18} that computing all-pairs 
distances in Hamming spaces $\{0,1\}^d$ and generalized Hamming spaces
$\Sigma^d,$ where $\Sigma$ is a finite alphabet, forms a frequent
major step in hierarchical clustering \cite{AL15} and in
phylogenetic-tree construction \cite{AB07}.  He also lists several
examples of applications of hierarchical clustering based on Hamming
distances in medical sciences \cite{A18}.  Once pairwise Hamming
distances between input points in a generalized Hamming space are known, a minimum
spanning tree of the points can be easily computed. The computation of
exact or approximate minimum spanning trees in
Hamming spaces and generalized Hamming spaces of high
dimension is widely used technique for clustering data, especially in
machine learning \cite{GZJ06,LRN09}.

In \cite{A18}, Arslan reduced the problem of computing all-pairs
Hamming distances in a generalized Hamming space to arithmetic matrix multiplication.
In particular, he obtained an $O(|\Sigma|n^{\omega})$-time algorithm for this problem
when there are $n$ input points in $\Sigma^d$ and $\omega$
denotes the exponent of the fast arithmetic matrix multiplication.
The currently best known upper bound on $\omega$ is
$2.3714$ \cite{ADV25,VXZ24}.

\junk{Clustering is nowadays a standard toll in the data
  analysis in computational biology/medical sciences, computer vision,
  and machine learning.}
One of the most popular variants
  of clustering are the $\ell$-center problem and the related
  $\ell$ minimum-diameter problem in metric spaces
  (among other things in Hamming spaces).
  Given a finite set $P$ of
points in a metric space,
the first problem asks for 
finding a set of $\ell$ points in the metric space, called
{\em centers}, such that the maximum distance of a point
in $P$ to its nearest center is minimized.
The second problem asks for partitioning
the input point set $P$ into $\ell$ clusters such that the maximum
of cluster diameters is minimized.
  They are known to be NP-hard and even
  NP-hard to approximate within $2-\epsilon$ for any
  constant $\epsilon > 0$ \cite{Gon85}.
  Gonzalez provided a simple $2$-approximation method
 for $\ell$-center clustering that yields also a $2$-approximation
 for minimum-diameter $\ell$-clustering \cite{Gon85}.
 In case of a  $d$-dimensional space, his method takes
 $O(nd\ell)$ time, where $n$ is the number of input points.
 %On a word RAM with computer words of length $w$, the factor $w$ can
 %be shaved off.
 For low dimensional Euclidean spaces of bounded
 dimension, and more generally, for metric spaces of bounded doubling
 dimension, there exist faster $2$-approximation algorithms for the
 $\ell$-center problem with hidden exponential dependence on the
 dimension in their running time, see \cite{FG88} and \cite{HM06},
 respectively. In some of the aforementioned applications of both
 $\ell$-clustering problems,
 massive datasets in a high dimensional metric space combined
 with a large value of the parameter $\ell$ may occur. In such situations,
 neither the $O(nd\ell)$-time method
 nor the ones of time complexity exponentially dependent
 on $d$ are sufficiently efficient.

The arithmetic matrix product of two 0-1 matrices is closely related
to the Boolean matrix product of the corresponding matrices and for
$n\times n$ matrices can be computed in $O(n^{2.372})$ time
\cite{ADV25,VXZ24}.  It is a basic tool in science and engineering
(e.g., in machine learning \cite{AMB24}).
Unfortunately, no truly subcubic practical algorithms for the
arithmetic product of $n\times n$ 0-1 matrices are known.  Therefore,
in some cases, a faster approximate arithmetic matrix multiplication
that enables for the identification of largest entries in the product
matrix can be more useful \cite{CL99,P13}.
Among other things, it can be also used to provide a fast 
approximation of the number of: the so called witnesses for the Boolean
product of two Boolean matrices \cite{GKL09}, triangles in a graph,
or more generally, subgraphs isomorphic to a small
pattern graph \cite{FKL15} etc.
There is a number of results on
approximate arithmetic matrix multiplication, where the quality of approximation
is 
expressed in terms of the Frobenius matrix norm $||\ ||_F$
(i.e., the square root of
the sum of the squares of the entries of the matrix) \cite{CL99,P13}.

Cohen and Lewis \cite{CL99}, followed by Drineas {\em et al.} \cite{DKM06},
used random sampling to approximate arithmetic matrix product. 
  Both papers provide an approximation $D$
  of the matrix product $AB$ of two $n\times n$ matrices $A$ and $B$ 
  guaranteeing $||AB-D||_F=O(||AB||_F/\sqrt c)$, for a parameter $c>1$ (see
  also \cite{P13}).
  In \cite{CL99} the matrices $A,\ B$  are required
  to be nonnegative while in \cite{DKM06} they are arbitrary.
  The approximation in \cite{DKM06} takes $O(n^2c)$ time.
  Drineas et al. \cite{DKM06} also provided bounds on
  the differences $|AB_{ij}-D_{ij}|$ on the entry level.
  %between the corresponding entries
  %of the product matrix and its approximation.
  Unfortunately, the best of these
  bounds is $\Omega(Q^2n/\sqrt c)$, where $Q$ is the maximum value
  of an entry in $A$ and $B.$
  Sarl\'os \cite{S06} 
  obtained
  the same Frobenius norm guarantees,
  %as those in \cite{CL99,DKM06}},
also in $O(n^2c)$ time. However, he derived stronger individual
upper bounds on the additive error of each entry $D_{ij}$ of the
approximation matrix $D$ of the form $O(||A_{i*}||_2||B_{*j}||_2/\sqrt c)$,
where $A_{i*}$ and $B_{*j}$ stand for the $i$-th row of $A$
and $j$-th column of $B,$ respectively. These bounds hold with
high probability.
More recently Pagh \cite{P13} presented a
randomized approximation $\tilde{O}(n(n+c))$-time
algorithm for the arithmetic product
of $n\times n$ matrices $A$ and $B$
such that each entry of the approximate
matrix product differs from the correct one at most by
$||AB||_F/\sqrt c$. Pagh's algorithm first compresses
the matrix product to a product of two polynomials and then applies the
fast Fourier transform  to multiply the polynomials.
Subsequently, Kutzkov \cite{K13} presented analogous deterministic algorithms
employing different techniques. For approximation results related to sparse
arithmetic matrix products, see \cite{IS09,P13}.
\junk{
E.g., Pagh presented an approximation $\tilde{O}(n(n+b))$-time
algorithm for the arithmetic multiplication
of two $n\times n$ matrices such that each entry of the approximate
matrix product differs from the correct one at most by the value
of the Frobenius norm for the correct product divided by
the square root of $b,$ where $b$
is a positive integer parameter \cite{P13}.}

\subsection{Our contributions}

Our first result is a linear-time (with respect to input
and output size) reduction of arithmetic 0-1 matrix
multiplication to computing all-pairs Hamming distances in a
Hamming space.  Combined with the reverse
reduction from \cite{A18},
it implies that these two problems have the same asymptotic time
complexity.
Our next result is a fast randomized algorithm for
approximate all-pairs Hamming distances in a
Hamming space. To derive it, we show a lemma on randomized
dimension reduction in Hamming spaces in the spirit of
\cite{AIR18,KOR00}.
For $n$
points in the $d$-hypercube, our algorithm
approximates all the pairwise Hamming
distances between them with high probability within $1+\delta$ in time
$O((\log N /\delta^2) (N\log_{1+\delta}d +n^2\log \log_{1+\delta}d))$, where
$N=nd$ is the input size. E.g., when $d=n$ and $\delta =\Omega (1),$
it takes $\tilde{O}(n^2)$ time while the algorithm for the exact
pairwise distances due to Arslan requires
%$O(n^{2.372})$ time
$O(n^{\omega})$ time
\cite{A18}.
By combining our approximation algorithm with our reduction, we
obtain also a fast randomized algorithm for approximate 0-1 matrix
multiplication. With high probability, the approximation of the inner
product of any row of the first matrix and any column of the second
matrix differs at most by an $\epsilon$ fraction of the minimum of the
Hamming distance between the row and the column and the complement of
the distance. If the minimum is within a constant factor of the exact
value of the inner product, the approximation is tight.
%Finally, 
Then, we present an output-sensitive randomized algorithm for a minimum
spanning tree of a set of $n$ points in the generalized Hamming space
$\Sigma^d,$ where $\Sigma$ is an alphabet of
$O(1)$ size. With high probability, it runs
in $\tilde{O}(n(d+n+M))$ time, where $M$ is the (Hamming) cost of a
minimum spanning tree. Interestingly, a major preliminary step
in our algorithm is a construction of an approximate
minimum spanning tree of the points based on our fast randomized
algorithm for approximate all-pairs Hamming distances.
Finally, we show that our lemma on randomized dimension reduction
in Hamming spaces yields also $(2+\epsilon)$- approximation algorithms
for the $\ell$-center clustering
and minimum-diameter $\ell$-clustering problems in a Hamming space
$\{0,1\}^d$ that are substantially faster than the known $2$-approximation
ones when both $\ell$ and $d$ are super-logarithmic.

\subsection{Techniques}
Our fast randomized algorithm for approximate all-pairs 
distances in a Hamming space is based on a  variant
%the technique
of randomized
dimension reduction in Hamming spaces
in the spirit of
\cite{AIR18,KOR00}.  The crucial
step in our output-sensitive randomized algorithm for a minimum
spanning tree in a generalized Hamming space is the computation of
all-pairs Hamming distances between the input points. It is
implemented by an adaptation and generalization of the known method of multiplying
Boolean or arithmetic square matrices in time dependent on the cost of
an approximate minimum spanning tree of the rows of the first matrix or the cost of
a minimum spanning tree of the columns of the second matrix, where the
rows and columns are treated as points in a hypercube
\cite{BL,FJLL18,GL03}.

\subsection {Paper organization}
 The next section contains basic
 definitions. Section 3 demonstrates our
 reduction of 0-1 matrix multiplication to
 computing all-pairs Hamming distances
 between the
rows of the first matrix and the columns of the second matrix.
Section 4 presents a lemma on randomized dimension
reduction in Hamming spaces and our
 fast randomized algorithm for approximate all-pairs
 Hamming distances based on the reduction.
 Section 5 presents our
 fast randomized algorithm for approximate 0-1
 matrix multiplication. Section 6 is devoted to
 our output-sensitive algorithm
 for a minimum spanning tree in a
 generalized Hamming space. Section 7 presents
 our randomized algorithm for approximate $\ell$-clustering
 problems.
We conclude with final remarks.

\junk{
Computing a distance matrix is a major step in hierarchi-
cal clustering [11] and in creating a phylogenetic tree [2].
A distance matrix requires computing distances for all pairs
of elements. Pairwise distances are also used in calculating
cluster distances [13]. In many bioinformatics applications,
the elements are biological sequences (amino-acid or nu-
cleotide sequences). The Unweighted Pair Group Method with
Arithmetic Mean (UPGMA) [11] builds a hierarchical tree
bottom up based on pairwise distances between clusters
which in turn are calculated by using pairwise distances
for sequences in these clusters [4,6]. The Hamming distance
is among the distances used for this purpose. The Ham-
ming distance between two strings of equal length is the
number of positions at which these strings differ.
Several recent applications use distance matrices based-
on Hamming distances [5,8,13].
Hierarchical clustering with Hamming distance yields
clonal grouping of immune cells in human. This is used in
understanding the micro-evolutionary dynamics that drive
successful immune responses and the dysregulation that
occurs with aging or disease [5]. Single nucleotide poly-
morphisms (SNPs) are the most common type of genetic
variation among people. A two-stage method for disease
association was proposed in [13]. The first stage constructs
SNP-sets by a clustering algorithm which employs Ham-
ming distance to measure the similarity between strings of
SNP genotypes and evaluates whether the given SNPs or
SNP-sets should be clustered. With the resulting SNP-sets,
the second stage develops an association test to examine
susceptibility to the disease of interest. BugMat [8] is a
program which generates a distance matrix based on Ham-
ming distance for bacterial genomes. It is deployed as part
of the Public Health England solution for M.tuberculosis
genomic processing and detecting possible disease trans-
mission. Mycobacterium tuberculosis has a genome size of
about 4.4 × 106 bases. Computing Hamming distances for
many pairs of long sequences is a time-consuming step.
For example, this step takes about 26.5 minutes (includ-
ing clustering) for 940 SNPs from 4864 subjects [13], and
3 hours for 4000 sequences [8]. In this paper we propose
an algorithm for this step. The main step of our algorithm
is based on matrix multiplication.
}

\section{Preliminaries}

For a positive integer $r$, $[r]$ stands for the set of positive integers not exceeding $r.$

The transpose of a matrix $D$ is denoted by $D^{\top}.$
If the entries of $D$ are in $\{0,1\}$ then
$D$ is a 0-1 matrix while if they are in a finite alphabet $\Sigma,$
$D$ is a $\Sigma$ matrix.

The symbol $\omega$ denotes the smallest real number such that two $n\times n$
matrices can be multiplied using $O(n^{\omega +\epsilon})$
operations over the field of reals, for all $\epsilon > 0.$

The {\em Hamming distance} between two points $a,\ b$ (vectors) in
$\{0,1\}^d$, and more generally, in $\Sigma^d$, is the number of the
coordinates in which the two points differ.  Alternatively, in case of
$\{0,1\}^d$ it can be defined as the distance between $a$ and $b$ in
the $L_1$ metric over $\{0,1\}^d.$ It is denoted by $\mathrm{ham}(a,b).$

For a positive real $\delta,$ an estimation of the Hamming distance
between two points $a,\ b\in \{0,1\}^d$ whose value
is in $[\mathrm{ham}(a,b)/(1+\delta), (1+\delta)\mathrm{ham}(a,b)]$
\junk{differs from
$\mathrm{ham}(a,b)$ at most by $\delta \cdot \mathrm{ham}(a,b)$}
is called a
{\em $\delta$-approximation} of $\mathrm{ham}(a,b).$
For a finite set $S$ of points in $\{0,1\}^d$,
a {\em $\delta$-approximate nearest neighbor}
of a point $a\in \{0,1\}^d$ in $S$ is a point $b\in S\setminus \{a\}$
such that $\mathrm{ham}(a,b)\le (1+\delta)\mathrm{ham}(a,c)$
for any point $c\in S\setminus \{ a\}.$

An event is said to hold {\em with high probability} (w.h.p. for short) in
terms of a parameter $N$ related to the input size if it holds with
probability at least $1-\frac 1 {N^{\alpha}}$, where $\alpha $ is any
constant not less than $1.$

\section{All-pairs Hamming distances versus 0-1 matrix multiplication}

Arslan studied the problem of computing the Hamming distances between
all pairs of points belonging to two point sets 
in a generalized Hamming space $\Sigma^d,$
where $\Sigma$ is a finite alphabet
\cite{A18}. He provided a fast algorithm for this
problem based on a reduction to arithmetic matrix multiplication 
\cite{A18}.The following theorem shows also a
reverse reduction. In the theorem, the two sets of points in the generalized  Hamming space are represented
by two matrices, where the vectors of coordinates of the points form
the rows or the columns, respectively.

\begin{theorem}\label{theo: red}
  Let $A$ and $B$ be two $\Sigma$ matrices of sizes $p\times q$ and
  $q\times r,$ respectively.
  \begin{enumerate}
  \item
   (\cite{A18})  The problem of computing the Hamming
  distance between each row of $A$ and each column of $B$ can be
  reduced in $O(|\Sigma|pr)$ time to that of computing $O(|\Sigma|)$ 
  arithmetic products of 0-1 matrices of sizes $p\times q$ and
  $q\times r$, respectively.
  
\item
  Conversely in case $\Sigma=\{0,1\}$, given the Hamming distances
  between each row of $A$ and each column of $B$,
  the arithmetic  matrix product of $A$ and $B$ can be computed
  in $O(pq+qr+pr)$ time.
  \end{enumerate}
\end{theorem}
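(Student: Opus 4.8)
The plan is to establish part~2 (part~1 being quoted from \cite{A18}) through a single arithmetic identity linking the inner product of two 0-1 vectors to their Hamming distance, and then to observe that this identity can be evaluated within the claimed time budget after one cheap preprocessing pass over each matrix.

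First I would fix notation: write $A_{i*}\in\{0,1\}^q$ for the $i$-th row of $A$ and $B_{*j}\in\{0,1\}^q$ for the $j$-th column of $B$, and let $|A_{i*}|=\sum_{k\in[q]}A_{ik}$ and $|B_{*j}|=\sum_{k\in[q]}B_{kj}$ denote their weights (numbers of ones). The entry $(AB)_{ij}$ is exactly the inner product $\sum_{k\in[q]}A_{ik}B_{kj}$, which counts the coordinates $k$ at which both vectors carry a $1$. The key observation is that each coordinate at which exactly one of the two vectors is $1$ contributes $1$ to $|A_{i*}|+|B_{*j}|$ and $1$ to $\mathrm{ham}(A_{i*},B_{*j})$, whereas each coordinate at which both are $1$ contributes $2$ to $|A_{i*}|+|B_{*j}|$ and $0$ to the distance, and each coordinate at which both are $0$ contributes to neither. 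This yields
\[
  \mathrm{ham}(A_{i*},B_{*j})=|A_{i*}|+|B_{*j}|-2\,(AB)_{ij},
\]
and solving for the product entry gives
\[
  (AB)_{ij}=\frac{|A_{i*}|+|B_{*j}|-\mathrm{ham}(A_{i*},B_{*j})}{2}.
\]

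With this identity in hand the algorithm is immediate. I would first compute all $p$ row weights $|A_{i*}|$ in $O(pq)$ total time and all $r$ column weights $|B_{*j}|$ in $O(qr)$ total time, by a single scan of $A$ and of $B$. Then, for each of the $pr$ pairs $(i,j)$, the entry $(AB)_{ij}$ is recovered in $O(1)$ time from the two precomputed weights and the supplied value $\mathrm{ham}(A_{i*},B_{*j})$, for $O(pr)$ time in all. Summing the three phases gives the claimed $O(pq+qr+pr)$ bound.

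I do not anticipate a genuine obstacle: the substance is the elementary inclusion–exclusion identity above, which is simply the $\{0,1\}$ specialization of the $L_1$/inner-product relation and mirrors the reduction of \cite{A18} in the reverse direction. The one point demanding care is the time accounting, namely that the weights $|A_{i*}|$ and $|B_{*j}|$ must be amortized over the preprocessing scans rather than recomputed inside the double loop; recomputing a weight per pair would inflate the running time to $\Theta(pqr)$ and destroy the linearity in input-plus-output size that the statement asserts.
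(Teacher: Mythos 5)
Your proposal is correct and follows essentially the same route as the paper: both rest on the identity $(AB)_{ij}=\tfrac{1}{2}\bigl(|A_{i*}|+|B_{*j}|-\mathrm{ham}(A_{i*},B_{*j})\bigr)$ (the paper derives it by summing a one-dimensional version coordinate by coordinate, you by classifying coordinates, which is the same inclusion--exclusion), followed by the identical time accounting of precomputing row and column weights in $O(pq+qr)$ time and recovering each of the $pr$ entries in $O(1)$ time.
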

\begin{proof}
  For the proof of the first part see \cite{A18}.  In the special case
  $\Sigma=\{0,1\}$ and $p,q,r=n$, it can be also found in the proof
  of Theorem 1 in \cite{L24}.  Here, we just give a
  similar short proof for the case $\Sigma=\{0,1\}$.  For a 0-1
  matrix $D$, let $\bar{D}$ denote the matrix obtained from $D$ by
  flipping the $1$ entries to $0$ and {\sl vice versa}.  Next, let
  $C,\ \bar{C}$ stand for the arithmetic matrix products of
  $A,\ B,$ and $\bar{A},\ \bar{B},$ respectively.  Then, the Hamming
  distance between the $i$-th row of $A$ and the $j$-th column of $B$
  is easily seen to be equal to $q-C_{ij}-\bar{C}_{ij}.$

  Conversely in case $\Sigma=\{0,1\}$,
  suppose that the Hamming distances \\
  $\mathrm{ham}(A_{i*},B_{*j})$
  between each row $A_{i*}$
  of $A$ and each column $B_{*j}$ of $B$ are
  given.
  \junk{For $k\in \{0,1\}$, let $A_{i*}^k$ and $B_{*j}^k$ denote the
  number of occurrences of $k$ in $A_{i*}$ and $B_{*j}$,
  respectively. Similarly, for $k\in \{0,1\}$, let $XA_{i*}^{k,j}$ and
  $XB_{*j}^{k,i}$
  denote the number of occurrences of $k$ in $A_{i*}$ and $B_{*j}$,
  respectively, that occur on the same positions as occurrences of
  $k+1 \mod 2$ in $B_{*j}$ and $A_{i*}$, respectively.
  Finally, let $x$ denote the unknown value of the inner arithmetic
  product $C_{ij}$ of $A_{i*}$ and $B_{*j}.$
  In particular, the following equations hold.

  $$A_{i*}^0=XA_{i*}^{0,j} + (q-\mathrm{ham}(A_{i*},B_{*j})) -x$$
  $$B_{*j}^1=XB_{*j}^{1,i}+x$$
  $$XA_{i*}^{0,j}=XB_{*j}^{1,i}$$

  They reduce to two following equations
  with two variables.

  $$A_{i*}^0=XA_{i*}^{0,j} + (q-\mathrm{ham}(A_{i*},B_{*j})) -x$$
  $$B_{*j}^1=XA_{i*}^{0,j}+x$$

  This yields

  $$x=\frac {q-\mathrm{ham}(A_{i*},B_{*j})-A_{i*}^0+B_{*j}^1}2
  =\frac {A_{i*}^1 + B_{*j}^1-\mathrm{ham}(A_{i*},B_{*j})}2$$}
  %\junk{
For a 0-1 vector $(v_1\dots,v_k)$,
let $(v_1,\dots,v_k)^1$ denote the number of
    1s in the vector, i.e., $\sum_{\ell \in [k]}v_{\ell}.$
    It is easy to verify that for
    two one-dimensional 0-1 vectors $(a)$ and $(b)$,
    the equality $ab=\frac {(a)^1+(b)^1-\mathrm{ham}((a)),((b))}2$
    holds. Hence, we can compute the inner product
    $C_{ij}$ of $A_{i*}$ and $B_{*j}$ as follows:
    $$C_{ij}=\sum_{\ell \in [q]}A_{i\ell}B_{{\ell}j}=$$
    $$\sum_{\ell \in [q]}\frac {(A_{i\ell})^1+
      (B_{{\ell}j})^1-\mathrm{ham}((A_{i\ell}),(B_{{\ell} j}))}2
      =\frac {A_{i*}^1 + B_{*j}^1-\mathrm{ham}(A_{i*},B_{*j})}2$$

    Thus, it is sufficient to precompute the
    numbers of $1$s  in each row of $A$
    and each column of $B$ in $O(pq+qr)$
    time in order to compute the entries
    $C_{ij}$ of the arithmetic matrix product $C$ of $A$ and $B$
    on the basis of $\mathrm{ham}(A_{i*},B_{*j})$ in $O(pr)$ time.
    \qed
\end{proof}

We can compute an exact minimum spanning tree of $n$ points
$p_1,\ p_2,\dots,p_n$ in the generalized Hamming space $\Sigma^d$ by running a
standard linear-time (Prim's) algorithm for a minimum spanning tree on the
clique on $[n]$, where the weight of the edge $(i,j)$ is set to the
Hamming distance between the points $p_i$ and $p_j$.  To obtain the
Hamming distances, we just form a matrix $P$ where the $i$-th row is
the vector of coordinates of the point $p_i$ for $i\in [n],$ and apply
Theorem \ref{theo: red} (1) to the matrices $P$ and $P^{\top}.$

\begin{corollary}\label{cor: gmst}
A minimum spanning tree of  $n$ points
$p_1,\ p_2,\dots,p_n$ in the generalized Hamming space $\Sigma^d$
can be constructed in $O(|\Sigma|T(n,d))$ time, where
$T(n,d)$ is the time required to multiply two
rectangular 0-1 matrices of sizes $n\times d$ and $d\times n,$
respectively.
\end{corollary}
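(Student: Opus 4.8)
The plan is to reduce the construction to a single application of Theorem~\ref{theo: red}(1) followed by a standard minimum-spanning-tree routine on a complete weighted graph. First I would encode the input points as an $n\times d$ $\Sigma$ matrix $P$ whose $i$-th row is the coordinate vector of $p_i$, exactly as described in the paragraph preceding the statement. The decisive observation is that the columns of $P^{\top}$ are again precisely the points $p_1,\dots,p_n$, so computing the Hamming distance between every row of $P$ and every column of $P^{\top}$ yields $\mathrm{ham}(p_i,p_j)$ for all ordered pairs $(i,j)\in[n]\times[n]$, i.e.\ the full table of pairwise Hamming distances (with zeros on the diagonal).

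Next I would invoke Theorem~\ref{theo: red}(1) with $A=P$ and $B=P^{\top}$, both of which have the shapes $n\times d$ and $d\times n$ matching the definition of $T(n,d)$. That part of the theorem reduces the all-pairs distance computation to $O(|\Sigma|)$ arithmetic products of such $0$-$1$ matrices, plus $O(|\Sigma|\, n^2)$ additional bookkeeping time. Each of the $O(|\Sigma|)$ products costs $T(n,d)$ time by definition, so obtaining the complete distance table takes $O(|\Sigma|\,T(n,d) + |\Sigma|\,n^2)$ time.

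With the pairwise distances in hand, I would form the clique on the vertex set $[n]$ and assign to each edge $(i,j)$ the weight $\mathrm{ham}(p_i,p_j)$, then run Prim's algorithm, which computes a minimum spanning tree of a dense $n$-vertex graph in $O(n^2)$ time. Since any algorithm producing all $\Theta(n^2)$ pairwise distances must write $\Omega(n^2)$ output, we have $T(n,d)=\Omega(n^2)$; hence both the $O(|\Sigma|\,n^2)$ bookkeeping term and the $O(n^2)$ spanning-tree step are absorbed into $O(|\Sigma|\,T(n,d))$, giving the claimed bound.

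I do not expect a genuine obstacle here: the content is entirely in the reduction of Theorem~\ref{theo: red}(1), and the only point requiring a moment's care is the final accounting, namely verifying via the trivial lower bound $T(n,d)=\Omega(n^2)$ that neither the distance-table assembly nor Prim's algorithm dominates the cost of the matrix products.
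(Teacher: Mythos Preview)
Your proposal is correct and follows essentially the same approach as the paper: form $P$, apply Theorem~\ref{theo: red}(1) to $P$ and $P^{\top}$ to obtain the full distance table, and then run Prim's algorithm on the resulting weighted clique. Your only addition is the explicit accounting via $T(n,d)=\Omega(n^2)$ (which follows from the $n\times n$ output size of the matrix product, not from the distance table as you phrased it) to absorb the $O(|\Sigma|\,n^2)$ and $O(n^2)$ terms---a detail the paper leaves implicit.
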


\section{Approximate all-pairs Hamming distances via dimension
reduction}

The following lemma and remark enable an efficient randomized dimension
reduction for the purpose of computing all-pairs approximate Hamming
distances for a set of $n$ input points (vectors) in $\{0,1\}^d$.
%The lemma is similar to Lemma 2.3 in \cite{AIR18} and as that
%based on arguments from \cite{KOR00}.
%Unfortunately, no proof
%of Lemma 2.3 in \cite{AIR18} seems to be available in the literature.
%%(in a series of lectures on similarity search P. Indyk presents
%%a simplified proof for a slightly different
%%version working solely for larger
%%scales).
\junk{
\begin{fact}\label{fact: R}\cite{AIR18,KOR00} (Lemma 2.3 in \cite{AIR18})
Fix the error parameter $\epsilon \in (0, 1/2),$ dimensions $k, d \ge 1,$
and scale (threshold) $t\in [1, d].$
For any $k \ge 1,$ there exists a randomized map $f : \{0,1\}^d \rightarrow \{0,1\}^k$ and an absolute
constant $C > 0,$
satisfying the following conditions for any fixed $x, y \in \{0,1\}^d :$
\begin{description}
\item
  if $\mathrm{ham}(x,y)\le t,$ then $Pr[\mathrm{ham}(f (x),f (y))\le k/2] \ge  1-e^{-C\epsilon^2 k}$;\\
\item
  if  $\mathrm{ham}(x,y)\ge (1 + \epsilon)t,$  then $Pr[\mathrm{ham}(f (x),f (y))> (1+\epsilon/2)k/2]\ge  1-e^{-C\epsilon^2 k}$.
\end{description}
\end{fact}

\begin{remark}\cite{AIR18}\label{rem: 1}
  The map $f$ can be constructed in $O(dk)$ time
via a random projection over
  $GF (2)$ by taking $f (x) = Ax,$
  where $A$ is a $k\times d$ matrix,
  with each entry being $1$ with some fixed probability $\phi$
  and $0$ otherwise. The probability $\phi$ depends solely on $t$.
  For  $k = O(\log N/\epsilon^2)$, where $N$ is related
  to the input size, 
  the conditions in Fact \ref{fact: R} hold with high probability
  (in terms of $N$).
\end{remark}}

\begin{lemma}\label{lem: R}(in the spirit  of \cite{AIR18,KOR00}, cf. Lemma 2.3 in \cite{AIR18})
Fix the error parameter $\epsilon \in (0, 1/2),$ dimensions $k, d \ge 1,$
and scale (threshold) $t\in [1, d].$
For any $k \ge 1,$ there exist a randomized map $f : \{0,1\}^d \rightarrow \{0,1\}^k$, an absolute constant $C,$ and a
constant $C_1 > 0$ (dependent only on $t$) such
that for any $\epsilon > 0$ there is a constant $C_2>C_1$
(dependent only on $t$ and $\epsilon$) such that
$C_2-\epsilon/30 > C_1+\epsilon/30$ and
the following conditions hold for any $x, y \in \{0,1\}^d$:
\begin{description}
\item
  if $\mathrm{ham}(x,y)\le t,$ then
  $Pr[\mathrm{ham}(f (x),f (y))\le (C_1+ \frac {\epsilon} {30})k] 
\ge  1-e^{ -C\epsilon^2 k}$;
\item
  if  $\mathrm{ham}(x,y)\ge (1 + \epsilon)t,$  
then $Pr[\mathrm{ham}(f (x),f (y))> (C_2-\frac {\epsilon} {30})k]
  \ge  1-e^{-C \epsilon^2 k}.$
\end{description}
\end{lemma}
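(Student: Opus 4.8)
The plan is to realize $f$ as a random linear map over $GF(2)$, in the spirit of \cite{AIR18,KOR00}: set $f(x)=Ax \bmod 2$, where $A$ is a $k\times d$ matrix whose entries are independent, each equal to $1$ with probability $\phi$ and $0$ otherwise, and where $\phi$ depends only on the threshold $t$. The first step is to reduce the two-point statement to a single-coordinate parity computation. Writing $z=x\oplus y$ and $w=\mathrm{ham}(x,y)=|z|$, the $i$-th coordinates of $f(x)$ and $f(y)$ differ exactly when $\langle A_i,z\rangle\equiv 1 \pmod 2$, that is, when an odd number of the $w$ relevant entries of row $A_i$ equal $1$. A standard parity identity gives that this happens with probability $p(w)=\tfrac12\bigl(1-(1-2\phi)^w\bigr)$, independently across the $k$ rows. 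Hence $\mathrm{ham}(f(x),f(y))$ is a sum of $k$ independent $\mathrm{Bernoulli}(p(w))$ variables with mean $p(w)k$.

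Next I would fix $\phi$ so that $(1-2\phi)^t=\tfrac12$ (equivalently $\phi=\tfrac12(1-2^{-1/t})\in(0,\tfrac12)$ for $t\ge 1$), and set $C_1=p(t)$ and $C_2=p((1+\epsilon)t)$. Since $(1-2\phi)^w$ is strictly decreasing in $w$, the map $p(\cdot)$ is strictly increasing, so $w\le t$ forces $p(w)\le C_1$ while $w\ge(1+\epsilon)t$ forces $p(w)\ge C_2$. With this choice $C_1=\tfrac14$ depends only on $t$ (here in fact not at all), whereas $C_2=\tfrac12\bigl(1-2^{-(1+\epsilon)}\bigr)$ is the $\epsilon$-dependent constant, exactly as the statement requires.

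The tail conditions then follow from concentration of the i.i.d.\ sum. Conditioning on $w\le t$, the mean is at most $C_1k$, so the deviation needed to exceed $(C_1+\epsilon/30)k$ is at least $(\epsilon/30)k$; Hoeffding's inequality bounds this probability by $\exp\bigl(-2(\epsilon/30)^2k\bigr)=e^{-\epsilon^2k/450}$, and the symmetric estimate handles $w\ge(1+\epsilon)t$, so the absolute constant $C=1/450$ suffices. I expect the one genuinely delicate point to be verifying the separation $C_2-\epsilon/30>C_1+\epsilon/30$, i.e.\ $C_2-C_1>\epsilon/15$, uniformly over $\epsilon\in(0,\tfrac12)$. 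With the above choice $C_2-C_1=\tfrac14\bigl(1-2^{-\epsilon}\bigr)$, which vanishes at $\epsilon=0$ and is concave and increasing in $\epsilon$. I would exploit concavity: the chord from $(0,0)$ to $\bigl(\tfrac12,\tfrac14(1-2^{-1/2})\bigr)$ lies below the curve, giving $C_2-C_1\ge 2\epsilon\cdot\tfrac14(1-2^{-1/2})=\tfrac{\epsilon}{2}(1-2^{-1/2})$ for $\epsilon\in(0,\tfrac12]$. Since $\tfrac12(1-2^{-1/2})\approx 0.146>\tfrac1{15}$, the required gap $\epsilon/15$ follows. The constants $1/30$ and $1/15$ are precisely what make this chord estimate go through, so the real care lies in this elementary but tight inequality rather than in the probabilistic machinery.
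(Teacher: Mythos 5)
Your proposal is correct, and its skeleton matches the paper's own proof: a random $GF(2)$ linear projection, the observation that each coordinate of $f(x)\oplus f(y)$ is an independent Bernoulli variable whose parameter $p(w)=\tfrac12\bigl(1-(1-2\phi)^w\bigr)$ is increasing in the Hamming distance $w$, and then a standard concentration bound with deviation $\epsilon/30$ to get the $e^{-C\epsilon^2k}$ tails. The genuine difference is in the calibration and the gap analysis. The paper fixes the entry probability at $\frac{1}{4t}$ (deriving the collision probability $\tfrac12\bigl(1-(1-\tfrac{1}{2t})^w\bigr)$ via the two-stage subset-plus-random-vector argument of Kushilevitz--Ostrovsky--Rabani, which is algebraically the same as your parity identity), and must then control $C_2-C_1$ through two successive estimates: monotonicity of $(1-\tfrac{1}{2t})^t$ in $t$ to reduce to $\tfrac14\bigl(1-e^{-\epsilon/2}\bigr)$, followed by a truncated Taylor expansion of $e^{-\epsilon/2}$, yielding $C_2-C_1\ge \epsilon/10$. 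You instead choose $\phi=\tfrac12\bigl(1-2^{-1/t}\bigr)$ so that $(1-2\phi)^t=\tfrac12$ exactly, which collapses all $t$-dependence: $C_1=\tfrac14$ and $C_2-C_1=\tfrac14\bigl(1-2^{-\epsilon}\bigr)$ in closed form, after which a single concavity (chord) argument gives $C_2-C_1\ge\tfrac{\epsilon}{2}\bigl(1-2^{-1/2}\bigr)\approx 0.146\,\epsilon$, comfortably above the required $\epsilon/15$. Your use of Hoeffding (giving $C=1/450$) in place of the paper's asymmetric Chernoff bounds (giving exponents $\epsilon^2k/900$ and $2\epsilon^2k/900$) is an immaterial variation. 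What your calibration buys is a shorter and tighter separation argument with no Taylor estimates and constants that are exact rather than approximate; what the paper's choice buys is a sampling probability $\frac{1}{4t}$ of the simplest possible form, in closer correspondence with the original constructions it cites.
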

\begin{proof}
  See Appendix.
  \end{proof}
 \junk{To define the map $f,$ generate a random $k\times d$ 0-1 matrix $F,$
  where each entry is set to $1$
  with probability $\frac 1 {2t}$. For $x\in \{ 0,1\}^d,$ let $f(x)=Fx^{\top}$ over the field $GF(2).$

  For $x,y \in \{0,1\}^d,$  consider the $i$-th coordinates $f^i(x),\ f^i(y)$ of $f(x)$ and $f(y),$
  respectively. Suppose that $\mathrm{ham}(x,y)=D.$

  To estimate the probability that $f^i(x)\neq f^i(y)$
  when $D>0,$ observe that for $z\in \{0,1\}^d,$ $f^i(z)$ can be equivalently obtained as follows.
 Pick a  random subset $S$ of the $d$ coordinates such that each coordinate is selected with probability
 $\frac 1 {2t}.$ Let $\bar{z}$ be the vector  in $\{0,1\}^d$ obtained by setting all coordinates
 of $z$ not belonging to $S$ to zero. Now the inner mod 2 product of $\bar{z}$ with
 a vector  $r\in \{0,1\}^d$ picked  uniformly at random yields $f^i(z).$

 It follows that the probability that $f^i(x)\neq f^i(y)$ is $\frac
 12(1-(1-\frac 1 {2t})^D).$ Simply,
 strictly following \cite{KOR00}, if none of
 the $D$ coordinates $j,$ where $x_j\neq y_j,$ is chosen into
 $S$ then $f^i(x)=f^i(y),$ otherwise at least one of such coordinates,
 say $m$, is in $S$, and for each setting of other choices,there is
 exactly one of the two choices for $r_m$ which yields $f^i(x)\neq f^i(y).$
 Observe that the probability is increasing in $D.$

  Consequently, if $\mathrm{ham}(x,y)\le t$ then the probability that
 $f^i(x)\neq f^i(y)$ does not exceed $\frac 12(1-(1-\frac 1 {2t})^t)\approx \frac 12 (1-e^{-1/2})$
 while when $\mathrm{ham}(x,y)\ge (1+\epsilon)t$ it is at least
 $\frac 12(1-(1-\frac 1{2t})^{(1+\epsilon)t})\approx 
\frac 12 (1-e^{-(1+\epsilon)/2})$.
 We set $C_1= \frac 12(1-(1-\frac 1 {2t})^t)$
 and $C_2=\frac 12 (1-(1-\frac 1{2t})^{(1+\epsilon)t}).$
  Hence, in the first case the expected value of $\mathrm{ham}(f(x),f(y))$ is
  at most $C_1k$ while in the second case it is at least $C_2k$.
  
Let us estimate
$C_2-C_1.$ By the monotonicity of the function
$(1-\frac{1}{2t})^{t}$ in $t \geq 1$, $C_2-C_1$ is at least
$$\frac{1}{2}\left[ \left( 1-\frac{1}{2t}\right)^{t}-
\left(1-\frac{1}{2t}\right)^{(1+\epsilon)t}\right] =
\frac{1}{2}\left(1-\frac{1}{2t}\right)^{t}\left(1-\left(1-\frac{1}
  {2t}\right)^{\epsilon t}\right) \geq
\frac{1}{4}(1-e^{-\frac{\epsilon}{2}}).$$
We expand $e^{-x}$ in the Taylor
series $\sum_{i=0}^{\infty}\frac{(-x)^i}{i!}$.
Since the sums of consecutive
pairs of odd and even components
are negative, we obtain:
$$C_2-C_1 \geq \frac{1}{4}\left(1-e^{-\frac{\epsilon}{2}}\right) \geq
\frac{1}{4}\left(1-\left(1-\frac{\epsilon}{2}+\frac{(\frac{\epsilon}{2})^2}{2}\right)\right) =
\frac{1}{4}\left(\frac{\epsilon}{2}-\frac{\epsilon^2}{8}\right).$$
Hence, since $\epsilon < \frac{1}{2}$, we infer:
$$C_2-C_1 \geq \frac{1}{4}\left(\frac{\epsilon}{2}-\frac{\epsilon^2}{8}\right) \geq
\frac{1}{4}\left(\frac{\epsilon}{2}-\frac{\epsilon}{16}\right)
= \frac{7 \epsilon}{64}\ge \frac  {\epsilon}{10}.$$

 We shall use the following 
Chernoff multiplicative bounds (see Appendix A in \cite{AS}).  For a
sequence of independently and identically distributed (i.i.d.) 0-1
random variables $X_1,X_2,\dots,X_m$, $Pr[\sum X_i >(p+\gamma)m] <
e^{-pm\gamma^2}$, and $Pr[\sum X_i <(p-\gamma)m] < e^{-2m\gamma^2},$
where $p=Pr[X_i=1]$.
%Thus, by the standard application of
%the Chernoff multiplicative bounds (see Appendix A in \cite{AS}),

By the Chernoff bounds,
if $\mathrm{ham}(x,y)\le t$ then the probability
 that $\mathrm{ham}(f(x),f(y))>  (C_1+\frac 1{30} \epsilon)k$
 is at most $e^{ -\frac 1{900} \epsilon^2 k}$.
 Similarly, if $\mathrm{ham}(x,y)\ge (1+\epsilon)t$
 %by the Chernoff bounds, if $\mathrm{ham}(x,y)\ge (1+\epsilon)t$
 then the probability that $\mathrm{ham}(f(x),f(y)) < (C_2-\frac 1{30} \epsilon)k$
 is at most $e^{ -\frac 2{900} \epsilon^2 k}$.
 
 Note that $C_2-\epsilon/30 > C_1+\epsilon/30$ by $C_2-C_1\ge  \frac  {\epsilon}{10}.$
 \qed
\end{proof}}
\begin{remark}\label{rem: 1}
 The map $f$ can be constructed in $O(dk)$ time
%via the random projection over
%$GF (2)$ depending only on $t$
according to the proof
of Lemma  \ref{lem: R} as follows:
generate a random $k\times d$ 0-1 matrix $F,$
  where each entry is set to $1$
  with probability $\frac 1 {4t}$ and
  for $x\in \{ 0,1\}^d,$ let $f(x)=Fx^{\top}$ $\mod 2.$
  For  $k = O(\log N/\epsilon^2)$ (e.g., $N$ can be related
  to the input size),
  the conditions in Lemma \ref{lem: R} hold with high probability
  (in terms of $N$).
\end{remark}

The following procedure computes a $\delta$-approximate Hamming
distance between each row of the first input 0-1 matrix $A$ of size
$p\times q$ and each column of the second input 0-1 matrix $B$ of size
$q\times r$.  It relies on Lemma \ref{lem: R} and Remark \ref{rem: 1}.
First, the pairs consisting of a row of $A$ and a column of $B$
that are identical (i.e., within $0$ Hamming distance) are identified
by sorting the rows of $A$ and the columns of $B$ jointly.
Next, random 0-1 matrices of size $k\times q$ are generated for
thresholds $t=(1+\delta)^{\ell},$ $\ell =0,\dots, \lceil \log_{1+\delta}
q \rceil ,$ with $k=O(\log N/\delta^2)$ and $N=pq+qr$, according to Remark
\ref{rem: 1}. Then, for each row of $A$ and each column of $B$ that
are not identical,
the approximate Hamming distance between the row and the column
is set to the smallest threshold $t$ for which the random projections
of the row and the column obtained by multiplication of the random
matrix corresponding to $t$ with the vectors of their coordinates
mod 2 are within the Hamming distance $(C_1+\frac {\epsilon} {30})k$
 ($C_1$ is a constant given in Lemma \ref{lem: R}).

\bigskip

    \noindent
        {\bf procedure} $APPAPHAM (A,B, \delta )$
        \par
        \noindent
            {\em Input}: Two 0-1 matrices $A$ and $B$
            of sizes $p\times q$ and $q\times r,$
            respectively, and a real  $\delta \in (0,\frac 12).$
            \par
            \noindent
                {\em Output}: A $p\times r$ matrix $W$, where
                for $1\le i\le p$ and $1\le j\le r,$ $W_{ij}$
                is an approximation of the Hamming distance between
                the $i$-th row $A_{i*}$ of $A$  and the $j$-th column
                $B_{*j}$ of $B.$
                \begin{enumerate}

                \item Sort the rows of $A$ and the columns of $B$
                  jointly as $q$-bit
                  binary numbers. For each maximal block $b$ of
                  consecutive identical rows of $A$
                    and/or columns of $B$, for each $A_{i*}\in b$ and
                    each $B_{*j}\in b$,
                    set $W_{ij}$ to $0.$
                  \item
                  Set $\epsilon$ to $\delta$
                  and $k$ to $O(\log N /\delta^2)$,
                  where $N=pq+qr$.
                 \item
                 Generate random $k\times q$     0-1 matrices 
                $F_t$ for the thresholds $t=1,\ (1+\delta),\dots, (1+\delta)^{\lceil \log _{1+\delta} q\rceil },$
                  defining the functions $f_t$ by $f_t(x)=F_tx^{\top}\mod 2$
                  (see Remark \ref{rem: 1}).
                \item
                  Compute the values of all the functions $f_t$  for each
                  row of $A$ and each column of $B.$,
                  i.e.,
                  for $1\le i\le p$ and $1\le j \le r,$
                  compute $F_tA_{i*}^{\top}\mod 2$ and $F_tB_{*j}\mod 2$ for
                  $t=1,\ (1+\delta),\dots, (1+\delta)^{\lceil  \log _{1+\delta}q \rceil}.$
                 \item
                   For $1\le i\le p$ and $1\le j \le r,$
                   if $W_{ij}$ has not been defined (i.e., set to $0$) in (1) then
                   set $W_{ij}$  to the smallest $t\in \{
1,\ (1+\delta),\dots, (1+\delta)^{\lceil \log _{1+\delta}q\rceil } \}$
such that $\mathrm{ham}(f_t(A_{i*}),f_t(B_{*j}))\le (C_1+\frac {\epsilon}
{30})k.$ (see Lemma \ref{lem: R})
                                \end{enumerate}

                \begin{lemma}\label{lem: first}
                                    $APPAPHAM(A,B,\delta)$ can be implemented
                                    in  time \\
                                    $O((\log N /\delta^2 )(N\log_{1+\delta}q+pr\log \log_{1+\delta}q))$,
                  where $N$ stands for the input size $pq+qr.$
                \end{lemma}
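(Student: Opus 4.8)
The plan is to bound the running time of each of the five steps of $APPAPHAM$ separately and to observe that the sum is dominated by exactly two terms: an $O((\log N/\delta^2)\,N\log_{1+\delta}q)$ term coming from forming all the random projections in step (4), and an $O((\log N/\delta^2)\,pr\log\log_{1+\delta}q)$ term coming from the threshold search in step (5). Throughout I write $k=O(\log N/\delta^2)$ for the projection dimension fixed in step (2) and $L=O(\log_{1+\delta}q)$ for the number of thresholds used in steps (3)--(5).

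First I would dispose of steps (1)--(4). Step (1) sorts $p+r$ strings of $q$ bits, which by radix sort costs $O((p+r)q)=O(N)$ time; the marking of the zero-distance pairs inside the blocks costs at most $O(pr)$ in total, since every pair $(i,j)$ lies in at most one block and is touched at most once, so $\sum_b |R_b|\,|C_b|\le pr$. Both are absorbed by the main terms. Step (2) is $O(1)$. Step (3) writes $L$ random $k\times q$ matrices, i.e. $O(kqL)=O((\log N/\delta^2)\,q\log_{1+\delta}q)$ entries, which is dominated by the first main term because $q\le N$. For step (4) I would, for each threshold $t$, multiply $F_t$ (size $k\times q$) against the $p$ rows of $A$ and the $r$ columns of $B$ modulo $2$; this costs $O(k(pq+qr))=O(kN)$ per threshold and hence $O(kNL)=O((\log N/\delta^2)\,N\log_{1+\delta}q)$ over all thresholds, yielding the first main term. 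The outputs $f_t(A_{i*})$ and $f_t(B_{*j})$ are stored as $k$-bit vectors for reuse in step (5).

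The crux is step (5). A naive implementation, namely a linear scan over all $L$ thresholds for each of the up to $pr$ surviving pairs with each predicate test costing $O(k)$, would cost $O(pr\,kL)=O((\log N/\delta^2)\,pr\log_{1+\delta}q)$, a factor $\log_{1+\delta}q$ too large. I would instead locate the smallest qualifying threshold by binary search over the $L$ geometrically spaced thresholds. Since the projections were precomputed in step (4), each probe only retrieves the two stored $k$-bit vectors, takes their exclusive-or, and counts the ones, costing $O(k)$ time; binary search performs $O(\log L)=O(\log\log_{1+\delta}q)$ probes, so each pair costs $O(k\log\log_{1+\delta}q)$ and the at most $pr$ pairs together cost $O((\log N/\delta^2)\,pr\log\log_{1+\delta}q)$, the second main term. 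Adding the two main terms gives the claimed bound.

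The step I expect to be the main obstacle is justifying that binary search actually returns the smallest threshold demanded in step (5), since over the discrete thresholds the predicate $\mathrm{ham}(f_t(A_{i*}),f_t(B_{*j}))\le (C_1+\epsilon/30)k$ is not deterministically monotone: the matrices $F_t$ are independent across $t$. I would argue that it is monotone with high probability. Writing $D=\mathrm{ham}(A_{i*},B_{*j})$ and taking $\epsilon=\delta$, Lemma \ref{lem: R} gives that w.h.p. the predicate holds at every threshold $t\ge D$ and fails at every threshold with $(1+\delta)t\le D$; because consecutive thresholds differ by the factor $1+\delta$, the interval $(D/(1+\delta),D)$ contains at most one threshold, so w.h.p. the predicate pattern along the thresholds is a block of failures followed by a block of successes, with at most one ambiguous value between them (for which either outcome still yields a valid $\delta$-approximation). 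A union bound over the $O(pr\,L)$ relevant pair-threshold events, using $\epsilon^2 k=\delta^2\cdot O(\log N/\delta^2)=O(\log N)$ in the failure probability $e^{-C\epsilon^2 k}$ of Lemma \ref{lem: R}, keeps the total failure probability polynomially small, so binary search is correct on every pair w.h.p. The $O(\log\log_{1+\delta}q)$-probe running time of binary search holds unconditionally, which is all the time bound requires.
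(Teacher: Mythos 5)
Your proof is correct and takes essentially the same route as the paper's: the same step-by-step accounting (radix sort and block scanning for Step 1 in $O(pq+qr+pr)$ time, $O(kq\log_{1+\delta}q)$ for generating the matrices, $O(kN\log_{1+\delta}q)$ for computing all projections, and binary search over the geometrically spaced thresholds giving $O(pr\,k\log\log_{1+\delta}q)$ for Step 5). Your closing paragraph justifying why binary search is legitimate even though the predicate is only monotone with high probability addresses a subtlety that the paper's one-line ``by using binary search'' glosses over, but this does not constitute a different approach --- as you note, the stated time bound holds unconditionally either way.
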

                \begin{proof}
                  Step 1 can be easily implemented in $O(pq+qr+pr)$ time
                  by radix sort and scanning the sorted list.
                  Steps 2, 3 take $O(kq\log_{1+\delta} q)$ time.
                  Step 4 can be implemented in $O((pq+qr)k\log_{1+\delta}q)$
                  time. Finally, Step 5 takes 
                  $O(pr k\log \log_{1+\delta}q)$ time
                  by using binary search.
                  It remains to recall that $k=O(\log N/\delta^2).$
                  \qed
                \end{proof}
                
\begin{lemma}\label{lem: ineq}
  Assume the notation from the procedure $APPAPHAM(A,B,\delta).$ For
  $1\le i\le p$ and $1\le j \le r,$
the following inequalities  hold  w.h.p.:

$$W_{ij}/(1+\delta)\le \mathrm{ham}(A_{i*},B_{*j}),$$
$$ \mathrm{ham}(A_{i*},B_{*j})\le (1+\delta)W_{ij}.$$
\end{lemma}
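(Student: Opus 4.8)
The plan is to show that the threshold $t^{*}$ selected as $W_{ij}$ in Step 5 is squeezed into the interval $(D/(1+\delta),\,(1+\delta)D]$, where $D=\mathrm{ham}(A_{i*},B_{*j})$; the two displayed inequalities are precisely the two sides of this containment. First I would dispose of the degenerate case $D=0$: such a pair is caught in Step 1, so $W_{ij}=0$ and both inequalities hold trivially. From now on assume $D\ge 1$, set $\epsilon=\delta$, and apply Lemma \ref{lem: R} with $x=A_{i*}$, $y=B_{*j}$ at each threshold $t\in\{1,(1+\delta),\dots,(1+\delta)^{\lceil\log_{1+\delta}q\rceil}\}$. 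The structural fact I would exploit is the guaranteed gap $C_2-\epsilon/30>C_1+\epsilon/30$ from Lemma \ref{lem: R}: it turns the Step 5 test ``$\mathrm{ham}(f_t(A_{i*}),f_t(B_{*j}))\le(C_1+\epsilon/30)k$'' into a clean separator between the regime $D\le t$ and the regime $D\ge(1+\epsilon)t$.

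For the upper bound $W_{ij}/(1+\delta)\le D$, let $t_{\mathrm{up}}$ be the smallest threshold with $t_{\mathrm{up}}\ge D$; since the top threshold is at least $q\ge D$, it exists, and because the preceding threshold $t_{\mathrm{up}}/(1+\delta)$ lies below $D$ we get $t_{\mathrm{up}}\le(1+\delta)D$ (the corner case $t_{\mathrm{up}}=1=D$ is checked directly). As $D\le t_{\mathrm{up}}$, the first condition of Lemma \ref{lem: R} gives, w.h.p., $\mathrm{ham}(f_{t_{\mathrm{up}}}(A_{i*}),f_{t_{\mathrm{up}}}(B_{*j}))\le(C_1+\epsilon/30)k$, so the Step 5 test passes at $t_{\mathrm{up}}$. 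Since $t^{*}$ is the smallest passing threshold, $W_{ij}=t^{*}\le t_{\mathrm{up}}\le(1+\delta)D$, which is the first inequality.

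For the lower bound $D\le(1+\delta)W_{ij}$, I would argue that the test fails at every threshold $t$ with $(1+\delta)t\le D$. For such $t$ we have $\mathrm{ham}(A_{i*},B_{*j})=D\ge(1+\epsilon)t$, so the second condition of Lemma \ref{lem: R} gives, w.h.p., $\mathrm{ham}(f_t(A_{i*}),f_t(B_{*j}))>(C_2-\epsilon/30)k$, and the gap $C_2-\epsilon/30>C_1+\epsilon/30$ then forces $\mathrm{ham}(f_t(A_{i*}),f_t(B_{*j}))>(C_1+\epsilon/30)k$, i.e.\ the test fails. Hence the smallest passing threshold $t^{*}$ cannot satisfy $(1+\delta)t^{*}\le D$, whence $D<(1+\delta)t^{*}=(1+\delta)W_{ij}$, giving the second inequality (in fact strictly). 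Thresholds in the ``grey zone'' $D/(1+\delta)<t<D$ are left unconstrained by Lemma \ref{lem: R}, but this is harmless: the two squeezing bounds already pin $t^{*}$ into the required range no matter how the test behaves there.

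The place that needs care, and the only real obstacle, is the probabilistic bookkeeping. The upper bound needs only the single good event at $t_{\mathrm{up}}$, whereas the lower bound needs the failure events at all $O(\log_{1+\delta}q)$ thresholds $t\le D/(1+\delta)$ to hold simultaneously. Each event from Lemma \ref{lem: R} fails with probability at most $e^{-C\epsilon^{2}k}$, so a union bound over the $O(\log_{1+\delta}q)$ thresholds costs only a polynomial-in-$N$ factor; since $k=O(\log N/\delta^{2})=O(\log N/\epsilon^{2})$ with a tunable constant, $e^{-C\epsilon^{2}k}=N^{-\Omega(1)}$ with an exponent we may drive above any prescribed $\alpha$, so both inequalities hold w.h.p.\ for the fixed pair $(i,j)$. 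If the whole matrix $W$ is required to be correct at once, one further union bound over the $pr\le N^{2}$ pairs, absorbed by enlarging the constant in $k$, keeps the statement w.h.p.
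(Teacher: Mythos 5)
Your proof is correct and takes essentially the same route as the paper's: both apply the two conditions of Lemma \ref{lem: R} with $\epsilon=\delta$, together with the gap $C_2-\epsilon/30>C_1+\epsilon/30$, to squeeze the selected threshold $W_{ij}$ into the interval $(\mathrm{ham}(A_{i*},B_{*j})/(1+\delta),\ (1+\delta)\mathrm{ham}(A_{i*},B_{*j})]$, with the degenerate case $\mathrm{ham}(A_{i*},B_{*j})=0$ handled via Step 1. The only differences are presentational: the paper phrases the two bounds as contradictions at the specific thresholds $W_{ij}/(1+\delta)$ and $W_{ij}$, whereas you identify the passing/failing thresholds directly and spell out the union bound over all $O(\log_{1+\delta}q)$ thresholds and all pairs --- a slight gain in probabilistic rigor (since $W_{ij}$ is itself random) rather than a genuinely different argument.
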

\begin{proof}
Suppose that $W_{ij}\ge (1+\delta)$ and $\mathrm{ham}(A_{i*},B_{*j})< W_{ij}/(1+\delta)$ hold. Note that $W_{ij}=(1+\delta)^k$ for some positive integer $k.$
Then, the inequality $\mathrm{ham}(f_t(A_{i*}),f_t(B_{*j}))
\le (C_1+\frac {\delta} {30})k$
should hold already for $t=W_{ij}/(1+\delta)$ w.h.p. by the first condition  in 
Lemma \ref{lem: R}
with $\epsilon=\delta$ and  sufficiently large $k=O(\log N/ \delta^2).$
We obtain a contradiction with the choice of $W_{ij}$ w.h.p.
Suppose in turn that $W_{ij} < (1+\delta)$, then $W_{ij}$ is equal to
$1$ or $0$ by its minimality w.h.p.
If  $\mathrm{ham}(A_{i*},B_{*j})= 1$,
the inequality $W_{ij}/(1+\delta)\le  \mathrm{ham}(A_{i*},B_{*j})$ trivially holds.
Otherwise, we have $\mathrm{ham}(A_{i*},B_{*j})=0$ and consequently $W_{ij}=0$
by Step 1 in $APPAPHAM(A,B,\delta).$ This completes the proof of
the first inequality.

To prove the second inequality, suppose 
that $\mathrm{ham}(A_{i*},B_{*j})> (1+\delta)W_{i j}$ holds.
Then, $\mathrm{ham}(f_t(A_{i*}),f_t(B_{*j}))\ge (C_2-\frac {\delta} {30})k$ holds
for $t=W_{ij}$ w.h.p.
by the second condition  in Lemma \ref{lem: R} with  $\epsilon=\delta$.
%Since $C_2-C_1\ge \frac {\delta} {10}$ (see the proof of Lemma \ref{lem: R})
The inequality $\mathrm{ham}(f_t(A_{i*}),f_t(B_{*j}))> (C_1+\frac {\delta} {30})k$
follows from Lemma \ref{lem: R}.
%close to $\frac 12$ and $k=O(\log n/ \epsilon^2).$
We obtain again a contradiction with the choice of $W_{ij}$ w.h.p.
\qed
\end{proof}

\begin{theorem}\label{theo: approxham} 
  Let $\delta \in(0,\frac 12)$, and let
  $A$ and $B$ be two 0-1 matrices of sizes $p\times q$ and
  $q\times r,$ respectively.
  W.h.p., for $1\le i\le p$ and $1\le j \le r$,
  $\delta$-approximations 
  of $\mathrm{ham}(A_{i*},\ B_{*j})$
  %(where in case $\mathrm{ham}(A_{i*},\ B_{*j})=0$ the additive factor $1$ occurs)
  can be computed 
  in  time $O((\log N /\delta^2 )(N\log_{1+\delta}q+pr\log \log_{1+\delta}q))$,
where $N$ stands for the input size $pq+qr.$
\end{theorem}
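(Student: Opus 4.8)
The plan is to show that the matrix $W$ produced by $APPAPHAM(A,B,\delta)$ already is the desired output, so that the theorem follows by combining Lemma \ref{lem: first} (for the running time) with Lemma \ref{lem: ineq} (for correctness). First I would recall the definition of a $\delta$-approximation from Section 2 and observe that the two inequalities supplied by Lemma \ref{lem: ineq} are exactly equivalent to the required membership. Indeed, rearranging
$$\mathrm{ham}(A_{i*},B_{*j})\le (1+\delta)W_{ij}$$
gives $W_{ij}\ge \mathrm{ham}(A_{i*},B_{*j})/(1+\delta)$, while rearranging
$$W_{ij}/(1+\delta)\le \mathrm{ham}(A_{i*},B_{*j})$$
gives $W_{ij}\le (1+\delta)\mathrm{ham}(A_{i*},B_{*j})$. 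Together these place $W_{ij}$ in the interval $[\mathrm{ham}(A_{i*},B_{*j})/(1+\delta),(1+\delta)\mathrm{ham}(A_{i*},B_{*j})]$, i.e.\ each $W_{ij}$ is a $\delta$-approximation of $\mathrm{ham}(A_{i*},B_{*j})$ in the sense of Section 2.

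For the running time I would simply invoke Lemma \ref{lem: first}, which bounds the cost of $APPAPHAM(A,B,\delta)$ by precisely the quantity $O((\log N /\delta^2 )(N\log_{1+\delta}q+pr\log \log_{1+\delta}q))$ claimed in the statement; no further work is needed there.

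The step I expect to require the most care is upgrading the \emph{per-pair} high-probability guarantee of Lemma \ref{lem: ineq} to a single high-probability statement covering all $pr$ entries of $W$ simultaneously. Lemma \ref{lem: ineq} asserts the two inequalities for a fixed pair $(i,j)$ with a failure probability that, through the Chernoff bounds underlying Lemma \ref{lem: R} (and a union bound over the $O(\log_{1+\delta}q)$ thresholds tested in Step~5), is bounded by a quantity of the form $e^{-c\,\delta^2 k}$ up to a factor polynomial in $N$. Since $pq\le N$ and $qr\le N$ force $p\le N$ and $r\le N$, there are at most $N^2$ such pairs, so a union bound over all of them leaves a total failure probability that is still of the form $N^{O(1)}e^{-c\,\delta^2 k}$. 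I would then choose the hidden constant in $k=O(\log N/\delta^2)$ large enough that $c\,\delta^2 k\ge (\alpha+O(1))\ln N$ for any prescribed constant $\alpha\ge 1$, which drives the overall failure probability below $N^{-\alpha}$. Hence the $\delta$-approximation property holds simultaneously for all $i,j$ with high probability, and together with the running-time bound this completes the proof.
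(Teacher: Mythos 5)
Your proposal is correct and takes essentially the same route as the paper, whose entire proof is to run $APPAPHAM(A,B,\delta)$ and combine Lemma \ref{lem: first} (running time) with Lemma \ref{lem: ineq} (correctness). The details you add beyond that---unwinding the definition of a $\delta$-approximation from the two inequalities, and the union bound over all $pr$ pairs with a suitably large constant in $k=O(\log N/\delta^2)$---are precisely what the paper leaves implicit, and you have them right.
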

\begin{proof}
  We run $APPAPHAM(A,B,\delta)$. Lemmata \ref{lem: first} and \ref{lem: ineq} yield the theorem.
  \qed
  \end{proof}

We can approximate a minimum spanning tree of $n$ points
$p_1,\ p_2,\dots,p_n$ in the Hamming space $\{0,1\}^d$ by running
a standard linear-time (Prim's) algorithm for a minimum spanning
tree on the clique
on $[n]$, where the weight of the edge $(i,j)$ is
set to the $\delta$-approximation $W_{ij}$ of the Hamming distance
between the points $p_i$ and $p_j$ given by $APPAPHAM.$
To obtain the approximations, we form the matrix
$P$ where the $i$-th row is the vector of coordinates
of the point $p_i$ for $i\in [n],$ and call
  $APPAPHAM(P,P^{\top},\delta).$ Let $T_{\delta}$ denote
  the minimum spanning tree of the weighted
  clique on $[n]$ produced by Prim's algorithm. Next, let
$T'_{\delta}$ be a spanning tree of
    the points $p_1,\ p_2,\dots, p_n$ obtained
    by substituting $p_i$ for $i$ in $T_{\delta},$
    for $i\in [n].$
    
  \begin{theorem}\label{theo:  approxmst}
   W.h.p., the cost of  $T'_{\delta}$  in
    the Hamming space $\{0,1\}^d$ is within
    the multiplicative factor $(1+\delta)^2$
    of the cost of a minimum spanning tree of
    the points $p_1,...,p_n$ in $\{0,1\}^d$.
    The tree $T_{\delta}$
    can be constructed in time \\
    $O((\log (nd)/\delta^2)
    (nd\log_{1+\delta}d +n^2\log \log_{1+\delta}d ))$.
    \end{theorem}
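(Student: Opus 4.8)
The plan is to establish the two assertions of the theorem in turn, the approximation guarantee first and the running time second. Throughout, write $\mathrm{ham}(p_i,p_j)$ for the true Hamming distance between the points and $W_{ij}$ for the approximation returned by $APPAPHAM(P,P^{\top},\delta)$; here the instantiation is $p=r=n$, $q=d$, and the input size is $N=\Theta(nd)$. The true cost of a spanning tree $T$ (with weights given by true Hamming distances) will be denoted $c(T)$, and $M$ will denote a genuine minimum spanning tree of the points, regarded as a spanning tree on the vertex set $[n]$ via the correspondence $i\mapsto p_i$.

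The engine of the approximation guarantee is the two-sided bound of Lemma \ref{lem: ineq}, which for each fixed pair $(i,j)$ gives $\mathrm{ham}(p_i,p_j)/(1+\delta)\le W_{ij}\le (1+\delta)\,\mathrm{ham}(p_i,p_j)$ with high probability. First I would promote this per-pair guarantee to a simultaneous one by a union bound over all $\binom{n}{2}=O(n^2)$ relevant pairs. Since each inequality fails with probability at most $N^{-\alpha}$ for a constant $\alpha$ that may be taken as large as we wish by enlarging the constant hidden in $k=O(\log N/\delta^2)$, and since $N=\Theta(nd)\ge\Omega(n)$ forces $n^2=O(N^2)$, choosing $\alpha\ge 3$ makes the total failure probability $O(N^{2-\alpha})=o(1)$. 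I condition on this good event for the remainder of the argument.

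On the good event I would sandwich $c(T'_\delta)$ in three steps. Because $T'_\delta$ uses exactly the edge set of $T_\delta$, the upper direction of Lemma \ref{lem: ineq} applied edge by edge yields
\[
c(T'_\delta)=\sum_{(i,j)\in T_\delta}\mathrm{ham}(p_i,p_j)\le(1+\delta)\sum_{(i,j)\in T_\delta}W_{ij}.
\]
Next, since $T_\delta$ is a minimum spanning tree of the clique on $[n]$ with respect to the weights $W$, and $M$ is merely some spanning tree on $[n]$, optimality gives $\sum_{(i,j)\in T_\delta}W_{ij}\le\sum_{(i,j)\in M}W_{ij}$. Finally, the other direction of Lemma \ref{lem: ineq} applied to each edge of $M$ gives $\sum_{(i,j)\in M}W_{ij}\le(1+\delta)\,c(M)$. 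Chaining the three inequalities produces $c(T'_\delta)\le(1+\delta)^2\,c(M)$, which is the claimed guarantee.

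For the running time I would simply instantiate Theorem \ref{theo: approxham} with $p=r=n$, $q=d$, and $N=\Theta(nd)$, bounding the cost of computing all the weights $W_{ij}$ by $O((\log(nd)/\delta^2)(nd\log_{1+\delta}d+n^2\log\log_{1+\delta}d))$; running Prim's algorithm on the complete graph over $[n]$ with these precomputed weights adds only $O(n^2)$, which is absorbed into the $n^2\log\log_{1+\delta}d$ term. I expect the only genuinely delicate point to be the probabilistic bookkeeping: Lemma \ref{lem: ineq} is stated for a single pair, so one must verify that the high-probability guarantee survives the union bound over all $O(n^2)$ pairs, and one must take care that the two sides of the sandwich invoke the two directions of Lemma \ref{lem: ineq} consistently—optimality of $T_\delta$ is taken with respect to $W$, while the two separate translations between $W$ and true distances are precisely where the two factors of $(1+\delta)$ are spent.
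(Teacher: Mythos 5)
Your proof is correct and follows essentially the same route as the paper's: the same chain of inequalities (upper-bounding $c(T'_\delta)$ via Lemma \ref{lem: ineq}, then invoking optimality of $T_\delta$ with respect to the weights $W$, then translating back to true distances on the true MST), and the same time analysis via Lemma \ref{lem: first} plus Prim's algorithm. Your explicit union-bound bookkeeping over the $O(n^2)$ pairs is a detail the paper leaves implicit, and it is handled correctly since the constant $\alpha$ in the w.h.p.\ definition can be made arbitrarily large by enlarging $k$.
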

\begin{proof}
  Let $U$ be  a minimum spanning tree 
  of the input points $p_1,...,p_n$.
  By Lemma \ref{lem: ineq} and the optimality of the
  minimum spanning tree $T_{\delta}$
  on the clique on $[n]$ with the
  edge weights $W_{ij}$,
  we obtain the following chain of inequalities w.h.p.:
$$\sum_{(p_i,p_j)\in T'_{\delta}}\mathrm{ham}(p_i,p_j)\le \sum_{(i,j)\in T_{\delta}}\mathrm{ham}(p_i,p_j)\le
(1+\delta) \sum_{(i,j)
  \in T_{\delta}}  W_{ij} $$
$$\le (1+\delta)\sum_
{(p_{\ell} ,p_m)\in U}  W_{\ell m}\le (1+\delta)^2\sum_{(p_{\ell},p_m)\in U} \mathrm{ham}(p_{\ell},p_m).$$

The upper time bound follows from Lemma \ref{lem: first} and the fact
that a minimum spanning tree of the weighted clique on $[n]$ can be found
in $O(n^2)$ time.
\qed
\end{proof}

Theorem \ref{theo: approxmst} can be compared with the following
fact restricted to the $L_1$ metric in $\{0,1\}^d,$ i.e.,
the Hamming metric in $\{0,1\}^d.$

\begin{fact}\label{fact: mst}
  (Section 3 in \cite{IM}). For $\delta>0,$ a $(1+\delta)$-approximation of a
minimum spanning tree for a set of $n$ points in $R^d$
with integer coordinates in $O(1)$ under the
$L_1$ or $L_2$ metric can be computed by a Monte Carlo
algorithm in $O(dn^{1+1/(1+\delta )})$ time.
\end{fact}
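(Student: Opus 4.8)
The plan is to prove the fact through locality-sensitive hashing (LSH), reducing the approximate minimum spanning tree computation to a near-linear number of approximate near-neighbor queries, and I would begin by reducing both metrics to the Hamming metric on a hypercube. Since the coordinates are integers bounded by a constant, each coordinate can be written in unary over $O(1)$ bits, which embeds $(R^d,L_1)$ isometrically (after a fixed scaling) into $\{0,1\}^{O(d)}$ under the Hamming metric; for $L_2$ over bounded integer coordinates I would first apply the standard $(1+o(1))$-distortion embedding of $L_2$ into $L_1$ of dimension $O(d)$ and then unary-encode. In either case every nonzero pairwise distance is a positive integer of magnitude at most $O(d)$, so the aspect ratio $\Delta$ is $O(d)$.

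Next I would build the bit-sampling LSH near-neighbor structure on the resulting $n$ Hamming points. For approximation factor $c=1+\delta$ and query radius $r$, bit sampling gives collision exponent $\rho=1/c=1/(1+\delta)$, yielding a $(c,r)$-approximate near-neighbor data structure with $\tilde O(dn^{1+\rho})$ preprocessing and $\tilde O(dn^{\rho})$ query time that is correct with high probability after the usual amplification.

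I would then recover an approximate minimum spanning tree from these queries by a Bor\r{u}vka/Kruskal-style sweep over the geometric scales $r_i=(1+\delta)^i$, $i=0,\dots,O(\log_{1+\delta}\Delta)$. At scale $r_i$ each current component issues one approximate near-neighbor query from a representative in order to locate a point within distance $(1+\delta)r_i$ lying in a different component, and the two components are merged along such an edge when one is found. Since every added edge has length at most $(1+\delta)$ times the true nearest inter-component distance active at its scale, the standard cut/scale argument shows that the produced tree costs at most $(1+\delta)$ times the true minimum spanning tree. The number of queries is $\tilde O(n)$ over all scales and rounds, each costing $\tilde O(dn^{\rho})$, so the total time is $\tilde O(dn^{1+\rho})=O(dn^{1+1/(1+\delta)})$ up to the suppressed logarithmic factors, and the procedure is Monte Carlo because each LSH query may err with small probability.

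The step I expect to be the main obstacle is the reduction from approximate minimum spanning tree to approximate near neighbor itself: a plain LSH structure answers monochromatic queries, whereas the Bor\r{u}vka step needs the nearest point in a \emph{different}, and dynamically changing, component. I would address this by temporarily removing already-merged representatives, or by asking each query for a short list of nearest candidates and discarding same-component hits, while arguing that only $O(n)$ merges ever occur so the extra bookkeeping stays within budget. One then has to confirm that the geometric scaling keeps the number of scales to $O(\log_{1+\delta}d)$ and that a union bound over all $\tilde O(n)$ queries preserves the high-probability (Monte Carlo) correctness; this is precisely where both the $n^{1/(1+\delta)}$ exponent and the error guarantee of the fact originate.
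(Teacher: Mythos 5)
This statement is an imported Fact: the paper gives no proof of it, only a citation to Section 3 of \cite{IM}, so there is no internal argument to compare against, and the relevant comparison is with the cited source. Your reconstruction---unary embedding of the bounded-integer $L_1$ (and, via a low-distortion embedding, $L_2$) instance into a Hamming cube, bit-sampling LSH with exponent $\rho = 1/(1+\delta)$, and a Bor\r{u}vka-style assembly of the tree from approximate near-neighbor queries, with dynamic deletions or candidate filtering to enforce the different-component constraint---is essentially the argument of that source, and it is correct up to the logarithmic factors you acknowledge suppressing.
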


We obtain also the following weak generalization of Theorem \ref{theo:  approxmst} to include
the generalized Hamming space $\Sigma^d,$ where $\Sigma$ is a finite alphabet.

\begin{corollary} \label{cor:g gapproxmst}
  Suppose $\Sigma$ is an alphabet of  $O(1)$ size.
  An approximate minimum  spanning tree of $n$ points $p_1,\dots, p_n$ in the generalized
  Hamming space $\Sigma^d$ whose cost is within $O(\log_2|\Sigma| +1)$ of the cost
  of a minimum spanning tree of the points can be constructed
  in  time $O(\log (nd)
    (nd\log d +n^2\log \log d ))$.
 \end{corollary}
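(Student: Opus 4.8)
The plan is to reduce the problem over $\Sigma^d$ to the binary case already settled by Theorem \ref{theo: approxmst}, via a coordinatewise binary encoding of the alphabet. First I would fix an injective map $g:\Sigma\to\{0,1\}^L$ with $L=\lceil\log_2|\Sigma|\rceil$, and encode each point $p_i\in\Sigma^d$ as a point $e_i\in\{0,1\}^{d'}$ with $d'=dL$ by applying $g$ independently to each of the $d$ coordinates. Since $|\Sigma|=O(1)$ we have $L=O(1)$ and hence $d'=O(d)$, so the encoding costs $O(nd)$ time and does not change the asymptotic dimension.

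The key observation I would establish is the two-sided distortion of distances under this encoding. If two symbols agree they contribute $0$ to both the $\Sigma$-Hamming and the binary Hamming distance, while if they differ their encodings (being injective images of length $L$) differ in at least one and at most $L$ positions. Summing over the $d$ coordinates yields, for every pair,
$$\mathrm{ham}(p_i,p_j)\le \mathrm{ham}(e_i,e_j)\le L\,\mathrm{ham}(p_i,p_j).$$
I would then run the construction of Theorem \ref{theo: approxmst} on $e_1,\dots,e_n$ with a fixed constant $\delta$, obtaining a tree $T'_\delta$ on the points whose binary cost is, w.h.p., within $(1+\delta)^2$ of the binary minimum-spanning-tree cost.

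It remains to chain these bounds. Let $U$ be a minimum spanning tree of $p_1,\dots,p_n$ in $\Sigma^d$. Applying the lower distortion bound to the edges of $T'_\delta$, then the guarantee of Theorem \ref{theo: approxmst} together with optimality of the binary minimum spanning tree against the edge set of $U$, and finally the upper distortion bound to the edges of $U$, I would obtain w.h.p.
$$\sum_{(p_i,p_j)\in T'_\delta}\mathrm{ham}(p_i,p_j)\le \sum_{(p_i,p_j)\in T'_\delta}\mathrm{ham}(e_i,e_j)$$
$$\le (1+\delta)^2\sum_{(p_\ell,p_m)\in U}\mathrm{ham}(e_\ell,e_m)\le (1+\delta)^2 L\sum_{(p_\ell,p_m)\in U}\mathrm{ham}(p_\ell,p_m).$$
Since $\delta$ is a constant and $L=\lceil\log_2|\Sigma|\rceil=O(\log_2|\Sigma|+1)$, the cost of $T'_\delta$ in $\Sigma^d$ is within $O(\log_2|\Sigma|+1)$ of the optimum.

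For the running time, since $d'=O(d)$ and $\delta$ is fixed, the $1/\delta^2$, $\log_{1+\delta}d'$ and $\log\log_{1+\delta}d'$ factors in Theorem \ref{theo: approxmst} collapse to $O(1)$, $O(\log d)$ and $O(\log\log d)$, giving the claimed $O(\log(nd)(nd\log d+n^2\log\log d))$ bound; the encoding and the final run of Prim's algorithm over the $n^2$ approximate distances are subsumed. The main source of looseness, and the reason Corollary \ref{cor:g gapproxmst} is only a weak generalization, is the upper distortion bound $\mathrm{ham}(e_i,e_j)\le L\,\mathrm{ham}(p_i,p_j)$, which unavoidably injects the $\log_2|\Sigma|$ factor no matter how $g$ is chosen; beyond checking that each w.h.p. inequality propagates through the chain, I do not expect a genuine obstacle.
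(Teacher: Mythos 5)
Your proposal is correct and matches the paper's own proof: both use the coordinatewise $\lceil\log_2|\Sigma|\rceil$-bit binary encoding, the two-sided distortion bound $\mathrm{ham}(a,b)\le \mathrm{ham}(g(a),g(b))\le \lceil\log_2|\Sigma|\rceil\,\mathrm{ham}(a,b)$, and an application of Theorem \ref{theo: approxmst} with a constant $\delta$ (the paper fixes $\delta=1$). Your write-up merely spells out the chaining of inequalities that the paper leaves implicit.
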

\begin{proof}
  We may assume without loss of generality
  that $|\Sigma|>1.$
  Let $g$ be an embedding of  $\Sigma^d$ into $\{ 0, 1\}^{d\lceil \log_2|\Sigma|\rceil }$ 
  obtained by replacing the value of each coordinate of a point in $\Sigma^d$
  by an $\lceil \log_2|\Sigma|\rceil $-digit  binary encoding of the value.
  Note that for any $a,\ b \in \Sigma^d,$
  $\mathrm{ham}(a,b)\le \mathrm{ham}(g(a),g(b))\le \lceil \log_2|\Sigma|\rceil  \mathrm{ham}(a,b).$
  Hence, it is sufficient to apply Theorem \ref{theo: approxmst}  to the images $g(p_1),\dots,g(p_n)$
  of the input points with $\delta$ set, say, to $1.$
  \qed
\end{proof}

%Observe that, e.g., for $\delta < \frac 12$ and $d>m^{1+1/4}$, our
%method for a $(1+\delta)$-approximate MST of $m$ points in
%$\{0,\ 1\}^d$ subsumes than that of Fact \ref{fact: mst}.

In a similar fashion, we can solve the problem of computing for each
point in a set $S$ of $n$ points in $\{0,1\}^d$ a
$\delta$-approximate nearest neighbor in $S$ by using our method for
all-pairs $\delta$-approximate Hamming distances given in Theorem \ref{theo: approxham}.
After computing all the pairwise $\delta$-approximate distances, it
remains for each point in $S$ to pick as its $\delta$-approximate nearest
neighbor the point in $S$ within the minimum $\delta$-approximate
distance. This additional step takes $O(n^2)$ time in total.

\begin{theorem}\label{theo: nn}
Let $\delta \in(0,\frac 12)$, and let $S$ be a set of $n$ points in
$\{0,1\}^d.$ One can compute for each point in $S$ a
$\delta$-approximate nearest neighbor in $S$ in time
$O((\log (nd)/\delta^2)
    (nd\log_{1+\delta}d +n^2\log \log_{1+\delta}d ))$.
\end{theorem}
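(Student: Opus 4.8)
The plan is to treat Theorem \ref{theo: approxham} as a black box and append a single linear scan per point, exactly along the lines sketched in the preceding paragraph. First I would form the $n\times d$ matrix $P$ whose $i$-th row is the coordinate vector of $p_i$, and call $APPAPHAM(P,P^{\top},\delta')$ for a parameter $\delta'=\Theta(\delta)$ fixed below. By Theorem \ref{theo: approxham}, this produces w.h.p. a matrix $W$ of $\delta'$-approximations of all pairwise Hamming distances; substituting $p=r=n$, $q=d$, and $N=\Theta(nd)$ gives running time $O((\log(nd)/\delta'^2)(nd\log_{1+\delta'}d+n^2\log\log_{1+\delta'}d))$. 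Since $\delta'=\Theta(\delta)$ we have $1/\delta'^2=\Theta(1/\delta^2)$ and $\log_{1+\delta'}d=\Theta(\log_{1+\delta}d)$, so this matches the claimed bound.

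Next I would, for each $i$, scan the $i$-th row of $W$ off the diagonal and select an index $m(i)$ minimizing $W_{i,m(i)}$, outputting $p_{m(i)}$ as the approximate nearest neighbor of $p_i$. This costs $O(n)$ per point and $O(n^2)$ in total, which is dominated by the cost of $APPAPHAM$.

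The correctness argument is where the only real subtlety lies. Fix $a=p_i$, let $b=p_{m(i)}$ be the selected point, and let $c$ be a true nearest neighbor of $a$ in $S\setminus\{a\}$. Since $b$ minimizes the approximate distance, $W_{ib}\le W_{ic}$, and combining the two inequalities of Lemma \ref{lem: ineq} (applied with $\delta'$) yields $\mathrm{ham}(a,b)\le(1+\delta')W_{ib}\le(1+\delta')W_{ic}\le(1+\delta')^2\mathrm{ham}(a,c)$. The hard part is precisely this accounting: the naive choice $\delta'=\delta$ only yields a $(1+\delta)^2$-approximation, not the $(1+\delta)$ demanded by the definition of a $\delta$-approximate nearest neighbor, because the two-sided error in $W$ compounds when a selected neighbor is compared against the true one. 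The fix is to set $\delta'=\sqrt{1+\delta}-1$, so that $(1+\delta')^2=1+\delta$; for $\delta\in(0,\tfrac12)$ this gives $\delta'\in(0,\tfrac12)$ and $\delta'=\Theta(\delta)$, so the approximation guarantee becomes exactly $(1+\delta)$ while the time bound is unaffected. The degenerate case of duplicate points is handled automatically: if some true nearest neighbor satisfies $\mathrm{ham}(a,c)=0$, then Step 1 of $APPAPHAM$ sets $W_{ic}=0$, forcing $W_{ib}=0$ and hence $\mathrm{ham}(a,b)=0$ via the same chain. Everything else is a direct appeal to Theorem \ref{theo: approxham} together with a trivial minimization.
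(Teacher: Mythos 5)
Your proposal is correct and follows essentially the same route as the paper: call $APPAPHAM$ on $P$ and $P^{\top}$ (i.e., invoke Theorem \ref{theo: approxham}), then pick for each point the minimum off-diagonal approximate distance, an extra $O(n^2)$ step dominated by the cost of $APPAPHAM$. In fact your write-up is more careful than the paper's own one-paragraph argument: the paper runs the procedure with parameter $\delta$ itself and declares the selected point a $\delta$-approximate nearest neighbor, even though the two applications of Lemma \ref{lem: ineq} compound to give only $\mathrm{ham}(a,b)\le(1+\delta)^2\,\mathrm{ham}(a,c)$; your rescaling $\delta'=\sqrt{1+\delta}-1$ (which keeps $\delta'\in(0,\tfrac 12)$ and $\delta'=\Theta(\delta)$, so the stated time bound is unaffected) is exactly what is needed to meet the paper's definition of a $\delta$-approximate nearest neighbor literally.
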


Theorem \ref{theo: nn} cannot be compared with the classical results
on approximate nearest neighbor (e.g., see \cite{AIR18,IM,KOR00}),
which assume a more general situation, where the query
point does not have to belong to the specified subset.

\section{Approximate 0-1 matrix multiplication}

In this section, we combine our procedure for all-pairs Hamming
distances\\
$APPAPHAM$ with our reduction of arithmetic 0-1 matrix
product to the Hamming distance matrix product given in Theorem
\ref{theo: red}(2) in order to approximate the arithmetic 0-1 matrix
product.

The known approximation methods for the arithmetic matrix product tend
to use an aggregated measure to express the goodness of the
approximation, most frequently the Frobenius norm (e.g., see
\cite{CL99,P13}).
We shall provide an estimation of the deviation of each
entry of our approximate matrix product from the correct value in
terms of the Hamming distance between the corresponding row of the
first matrix and the corresponding column of the second matrix. When the distance is
relatively small or on the contrary relatively large, (our upper bound
on) the deviation is small.

\bigskip

    \noindent
        {\bf procedure} $APPROXMM (A,B, \delta )$
        \par
        \noindent
            {\em Input}: Two 0-1 matrices $A$ and $B$  of sizes $p\times q$ and $q\times r,$
            respectively, and a real  $\delta \in (0,\frac 12).$
            \par
            \noindent
                {\em Output}: A $p\times r$ matrix $C'$, where
                for $1\le i\le p$ and $1\le j\le r,$ $C'_{ij}$
                is an approximation of the inner product $C_{ij}$ of
                the $i$-th row $A_{i*}$ of $A$  and the $j$-th column
                $B_{*j}$ of $B.$
                \begin{enumerate}

                \item
                  Call $APPAPHAM (A,B, \delta ).$
                \item
                  Set $\bar{B}$ to the matrix
                  resulting from flipping $1$
                  entries to $0$ and $0$ entries
                  to $1$ in the matrix $B.$
                \item
                  Call $APPAPHAM (A,\bar{B}, \delta ).$
                 \item
                 For $1\le i\le p$,
                 compute the number $A_{i*}^1$ of occurrences of $1$ in $A_{i*}$.
               \item
                 For $1\le j \le r,$
                 compute the number $B_{*j}^1$ of occurrences of $1$ in $B_{*j}$.
               \item
                 For $1\le i\le p$ and $1\le j \le r,$
                 set $D_{ij}$  to
                 $\frac {A_{i*}^1
                   +B_{*j}^1-W_{ij}}2$, where $(W_{ij})$ is the matrix returned
                 by the first call of $APPAPHAM$.
               \item
                For $1\le i\le p$ and $1\le j \le r,$
                 set $D'_{ij}$  to
                 $\frac {A_{i*}^1
                   +B_{*j}^1-(q-W'_{ij})}2$, where $(W'_{ij})$ is the matrix returned
                 by the second call of $APPAPHAM$.
               \item
                 For $1\le i\le p$ and $1\le j \le r,$
                 if $W_{ij}\le W'_{ij}$ then set $C'_{ij}$ to $D_{ij}$ as an approximation
                 of $C_{ij}$ otherwise set $C'_{ij}$ to $D'_{ij}$ as the approximation.
                 
   \end{enumerate}

                The following lemma suggests that the estimations
                $D_{ij}$ might be better when $\mathrm{ham}(A_{i*},B_{*j})$ are
                small while the estimations $D'_{ij}$ might be better
                when the differences $q-\mathrm{ham}(A_{i*},B_{*j})$ are
                small.

\begin{lemma}\label{lem: third}
  For $1\le i \le p$ and $1\le j\le r,$
  the estimations $D_{ij}$ and $D'_{ij}$ of the
  entry $C_{ij}$ of the arithmetic product $C$ of $A$ and $B$ produced
  by the procedure $APPROXMM (A,B, \delta )$ satisfy the following
  inequalities w.h.p.:

  $$ C_{ij}-\frac{\delta\times  \mathrm{ham}(A_{i*},B_{*j})}2 \le D_{ij}
  \le C_{ij}+\frac {\delta \times \mathrm{ham}(A_{i*},B_{*j})} {2+2\delta}   $$
  $$ C_{ij}-\frac {\delta (q-\mathrm{ham}(A_{i*},B_{*j}))}{2+2\delta}\le D'_{ij}  \le C_{ij} +
  \frac {\delta (q-\mathrm{ham}(A_{i*},B_{*j}))}2.$$
\end{lemma}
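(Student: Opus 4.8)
From Theorem~\ref{theo: red}(2), the exact inner product satisfies
$C_{ij}=\frac{A_{i*}^1+B_{*j}^1-\mathrm{ham}(A_{i*},B_{*j})}{2}$.
The plan is to compare this with the two estimators. By construction,
$D_{ij}=\frac{A_{i*}^1+B_{*j}^1-W_{ij}}{2}$, so
$D_{ij}-C_{ij}=\frac{\mathrm{ham}(A_{i*},B_{*j})-W_{ij}}{2}$, reducing everything to bounding the approximation error $\mathrm{ham}(A_{i*},B_{*j})-W_{ij}$ from Lemma~\ref{lem: ineq}. For the primed estimator, note that $\mathrm{ham}(A_{i*},\bar B_{*j})=q-\mathrm{ham}(A_{i*},B_{*j})$, so $W'_{ij}$ is a $\delta$-approximation of $q-\mathrm{ham}(A_{i*},B_{*j})$; substituting $q-W'_{ij}$ in place of the true distance in the formula for $C_{ij}$ gives $D'_{ij}-C_{ij}=\frac{(q-W'_{ij})-\mathrm{ham}(A_{i*},B_{*j})}{2}$, reducing this case to bounding $(q-\mathrm{ham}(A_{i*},B_{*j}))-W'_{ij}$.

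**Carrying out the first estimator.** I would invoke Lemma~\ref{lem: ineq} applied to the first call of $APPAPHAM$, which gives w.h.p.
$W_{ij}/(1+\delta)\le \mathrm{ham}(A_{i*},B_{*j})\le (1+\delta)W_{ij}$.
Writing $h=\mathrm{ham}(A_{i*},B_{*j})$, the left inequality $W_{ij}\le(1+\delta)h$ yields $h-W_{ij}\ge -\delta h$, hence $D_{ij}-C_{ij}\ge -\frac{\delta h}{2}$, which is the left bound. The right inequality $h\le(1+\delta)W_{ij}$ gives $W_{ij}\ge h/(1+\delta)$, so $h-W_{ij}\le h-\frac{h}{1+\delta}=\frac{\delta h}{1+\delta}$, hence $D_{ij}-C_{ij}\le \frac{\delta h}{2(1+\delta)}$, the right bound.

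**Carrying out the second estimator.** Here I apply Lemma~\ref{lem: ineq} to the \emph{second} call $APPAPHAM(A,\bar B,\delta)$, so that with $h'=q-h=\mathrm{ham}(A_{i*},\bar B_{*j})$ we get w.h.p.
$W'_{ij}/(1+\delta)\le h'\le (1+\delta)W'_{ij}$.
The error is $D'_{ij}-C_{ij}=\frac{(q-W'_{ij})-h}{2}=\frac{h'-W'_{ij}}{2}$, which is exactly the quantity I bounded above with $h$ replaced by $h'=q-h$. Thus the identical algebra gives $-\frac{\delta h'}{2}\le D'_{ij}-C_{ij}\le \frac{\delta h'}{2(1+\delta)}$, i.e. $-\frac{\delta(q-h)}{2(1+\delta)}\le D'_{ij}-C_{ij}$ (using the right-hand side of the generic bound now for the lower direction after the sign bookkeeping) and symmetrically the upper bound $\frac{\delta(q-h)}{2}$, matching the stated inequalities. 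I must simply be careful that the two directions swap roles between the $h$-case and the $h'$-case, since the stated asymmetric constants $\frac{1}{2}$ and $\frac{1}{2+2\delta}$ appear on opposite sides for the two estimators.

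**Main obstacle.** There is no real analytic difficulty; the only delicate points are the sign/direction bookkeeping when translating the multiplicative guarantees of Lemma~\ref{lem: ineq} into additive bounds on $h-W_{ij}$, and verifying that the $\delta$-approximation statement transfers correctly to $\bar B$ via the identity $\mathrm{ham}(A_{i*},\bar B_{*j})=q-\mathrm{ham}(A_{i*},B_{*j})$. Since $W'_{ij}$ approximates $q-h$ rather than $h$, the whole argument for $D'_{ij}$ is the mirror image of that for $D_{ij}$, and the proof is essentially two applications of the same one-line algebraic reduction. Finally, all four inequalities hold w.h.p. because Lemma~\ref{lem: ineq} holds w.h.p. for each fixed $(i,j)$.
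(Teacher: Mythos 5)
Your approach is the same as the paper's: express $C_{ij}$ via the identity $C_{ij}=\frac{A_{i*}^1+B_{*j}^1-\mathrm{ham}(A_{i*},B_{*j})}{2}$ from Theorem~\ref{theo: red}(2), reduce each bound to the multiplicative guarantees of Lemma~\ref{lem: ineq}, and handle $D'_{ij}$ via the observation $\mathrm{ham}(A_{i*},\bar B_{*j})=q-\mathrm{ham}(A_{i*},B_{*j})$. Your treatment of $D_{ij}$ is correct and matches the paper's.

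However, your derivation for $D'_{ij}$ contains a sign error that makes it internally inconsistent. Writing $h=\mathrm{ham}(A_{i*},B_{*j})$ and $h'=q-h$, since $D'_{ij}=\frac{A_{i*}^1+B_{*j}^1-(q-W'_{ij})}{2}$ and $C_{ij}=\frac{A_{i*}^1+B_{*j}^1-h}{2}$, the error is $D'_{ij}-C_{ij}=\frac{h-(q-W'_{ij})}{2}=\frac{W'_{ij}-h'}{2}$, i.e.\ the \emph{negative} of the quantity $\frac{h'-W'_{ij}}{2}$ you wrote. With your sign, the ``identical algebra'' yields $-\frac{\delta h'}{2}\le D'_{ij}-C_{ij}\le\frac{\delta h'}{2(1+\delta)}$, which is what you display first --- but this contradicts the bounds you then state as the conclusion (and which the lemma asserts), namely $-\frac{\delta h'}{2+2\delta}\le D'_{ij}-C_{ij}\le\frac{\delta h'}{2}$: the constants sit on opposite sides, and the appeal to ``sign bookkeeping'' does not bridge the two. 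With the correct sign the argument is clean and needs no such step: $W'_{ij}\le(1+\delta)h'$ (first inequality of Lemma~\ref{lem: ineq} applied to $A,\bar B$) gives $W'_{ij}-h'\le\delta h'$, hence $D'_{ij}\le C_{ij}+\frac{\delta h'}{2}$, while $W'_{ij}\ge h'/(1+\delta)$ gives $W'_{ij}-h'\ge-\frac{\delta h'}{1+\delta}$, hence $D'_{ij}\ge C_{ij}-\frac{\delta h'}{2+2\delta}$. So your final bounds are the right ones, but the chain of reasoning you wrote to reach them breaks at this one point; fixing the initial error formula repairs it.
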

\begin{proof}
  Let us assume the notation from the proof
  of Theorem \ref{theo: red}. By the aforementioned proof,
  we have $C_{ij}=\frac {A_{i*}^1 + B_{*j}^1-\mathrm{ham}(A_{i*},B_{*j})}2$.
  %\frac {q-\mathrm{ham}(A_{i*},B_{*j})-A_{i*}^0
 %   +B_{*j}^1}2$.
  Hence, by the first inequality stated in Lemma \ref{lem: ineq},
  we obtain $D_{ij}\ge C_{ij}-\delta \mathrm{ham}(A_{i*},B_{*j})/2$ w.h.p.
  Similarly, by the second inequality in Lemma \ref{lem: ineq},
  we obtain $D_{ij}\le  C_{ij}+\delta \mathrm{ham}(A_{i*},B_{*j})/(2+2\delta)   $ w.h.p.
  Analogously, we obtain the inequalities on $D'_{ij}$ by observing
  that $\mathrm{ham}(A_{i*},\bar{B}_{*j})=q-\mathrm{ham}(A_{i*},B_{*j}).$
  \qed
\end{proof}

\begin{theorem}\label{theo: appmm} 
  Let
  $A$ and $B$ be two 0-1 matrices of sizes $p\times q$ and
  $q\times r,$ respectively. For  $\epsilon >0,$
  one can compute an approximation $C''$ of the arithmetic
  matrix product $C$ of $A$ and $B$ 
  such that w.h.p. for $1\le i \le p$ and $1\le j\le r$,
  the approximation $C''_{ij}$ of the inner product $C_{ij}$
  of $A_{i*}$ and $B_{*j}$ 
   differs at most by $\epsilon \min \{\mathrm{ham}(A_{i*},B_{*j}),\ q-\mathrm{ham}(A_{i*},B_{*j})\}$
  from $C_{ij}$
  and   $C''$
  can be computed in time
  $O((\log N /\epsilon^2 )(N\log_{1+\epsilon}q+pr\log \log_{1+\epsilon}q))$,
 % $O((n\log n /\epsilon^2 )\log_{1+\epsilon}q+pr\log \log_{1+\epsilon}q)$,
  where $N$ stands for the input size $pq+qr.$
\end{theorem}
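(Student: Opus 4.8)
The plan is to run $APPROXMM(A,B,\delta)$ with the accuracy parameter $\delta$ set to $\Theta(\epsilon)$ and to output $C''=C'$, the matrix it returns. Everything then reduces to analyzing, entry by entry, how far the selected estimator deviates from $C_{ij}$ and to choosing $\delta$ so that this deviation is at most $\epsilon\min\{\mathrm{ham}(A_{i*},B_{*j}),\,q-\mathrm{ham}(A_{i*},B_{*j})\}$.

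I would start from Lemma \ref{lem: third}, which already gives, writing $h=\mathrm{ham}(A_{i*},B_{*j})$, the two-sided bounds $|D_{ij}-C_{ij}|\le \delta h/2$ and $|D'_{ij}-C_{ij}|\le \delta(q-h)/2$ w.h.p. (the upper deviations in the lemma are even smaller, so $\delta h/2$ and $\delta(q-h)/2$ are the governing magnitudes). Thus $D_{ij}$ is the good estimator exactly when $h$ is small and $D'_{ij}$ when $q-h$ is small, and the whole point is to show that the selection rule in Step 8 (choose $D_{ij}$ iff $W_{ij}\le W'_{ij}$) picks the right one up to a constant factor.

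The key step is to translate the comparison of the approximate quantities $W_{ij}$ and $W'_{ij}$ into a bound on the true ratio $h/(q-h)$. Suppose $W_{ij}\le W'_{ij}$, so $D_{ij}$ is selected. Since $W_{ij}$ is a $\delta$-approximation of $h$ and $W'_{ij}$ of $\mathrm{ham}(A_{i*},\bar B_{*j})=q-h$, Lemma \ref{lem: ineq} gives $h/(1+\delta)\le W_{ij}\le W'_{ij}\le (1+\delta)(q-h)$, hence $h\le (1+\delta)^2(q-h)$. Consequently $h\le (1+\delta)^2\min\{h,q-h\}$: if $h\le q-h$ this is immediate, and otherwise $\min\{h,q-h\}=q-h$ and the displayed inequality applies. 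Combining with $|D_{ij}-C_{ij}|\le \delta h/2$ yields a deviation of at most $\tfrac{\delta(1+\delta)^2}{2}\min\{h,q-h\}$. The case $W_{ij}>W'_{ij}$ is symmetric, using $q-h\le(1+\delta)^2 h$, and produces the same bound for $D'_{ij}$. I expect this translation, rather than any calculation, to be the main obstacle, since the selection is made on approximate values and the argument must absorb the resulting $(1+\delta)^2$ slack.

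It then remains to fix $\delta$ and account for the probability and the running time. Since $\delta\in(0,\tfrac12)$ forces $(1+\delta)^2<\tfrac94$, setting $\delta=\Theta(\epsilon)$ (concretely $\delta\le \tfrac{8\epsilon}{9}$, capped below $\tfrac12$) makes $\tfrac{\delta(1+\delta)^2}{2}\le\epsilon$, so every entry $C''_{ij}$ deviates from $C_{ij}$ by at most $\epsilon\min\{h,q-h\}$. The per-entry guarantees from Lemmata \ref{lem: ineq} and \ref{lem: third} each hold w.h.p., so a union bound over the $pr$ pairs and the two invocations of $APPAPHAM$ gives the simultaneous guarantee w.h.p. (absorbing the union into $k=O(\log N/\delta^2)$). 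Finally, the cost is dominated by the two calls to $APPAPHAM$, each bounded by Lemma \ref{lem: first}, while Steps 4--8 add only $O(pq+qr+pr)$; since $\delta=\Theta(\epsilon)$ gives $\log_{1+\delta}q=\Theta(\log_{1+\epsilon}q)$ and $1/\delta^2=\Theta(1/\epsilon^2)$, the total time is $O((\log N/\epsilon^2)(N\log_{1+\epsilon}q+pr\log \log_{1+\epsilon}q))$, as claimed.
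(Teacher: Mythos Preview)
Your proof is correct and follows essentially the same approach as the paper's: run $APPROXMM(A,B,\delta)$ with $\delta=\Theta(\epsilon)$, invoke Lemma~\ref{lem: third} for the per-estimator deviations, use Lemma~\ref{lem: ineq} to show the selection rule in Step~8 picks an estimator whose governing distance is within a $(1+\delta)^2$ factor of $\min\{h,q-h\}$, and absorb the constant by the choice of $\delta$. Your treatment of the selection step is in fact more explicit than the paper's, which simply asserts that $C'_{ij}$ is based on a $\delta$-approximation of the minimum and that choosing $\delta$ a constant factor smaller than $\epsilon$ suffices.
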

\begin{proof}
  Let us apply $APPROXMM(A,B,\delta)$ for $\delta \in (0,\frac 12)$.
  By Lemma \ref{lem: third}, we have that $D_{ij}$ differs at most by $\frac {\delta}2 \mathrm{ham}(A_{i*},B_{*j})$
  from $C_{ij}$ while $D'_{ij}$ differs at most by $\frac {\delta}2 (q-\mathrm{ham}(A_{i*},B_{*j}))$
  from $C_{ij}.$ It follows also from Lemma \ref{lem: ineq} that
  $W_{ij}$ and $W'_{ij}$ are w.h.p. $\delta$-approximations of $\mathrm{ham}(A_{i*},B_{*j})$ and $q-\mathrm{ham}(A_{i*},B_{*j})$,
  respectively. Hence, $C'_{ij}$ in $APPROXMM$ is based on an $\delta$-approximation
  of $\min \{ \mathrm{ham}(A_{i*},B_{*j}),\ q-\mathrm{ham}(A_{i*},B_{*j})\}$.
  Thus, it is sufficient to pick
  $\delta$ smaller by a constant factor than $\epsilon$
  and apply $APPROXMM(A,B,\delta)$, setting $C''_{ij}$ to $C'_{ij}$
  for $1\le i\le p$ and $1\le j \le r,$  to obtain the theorem.
  \qed
\end{proof}

Note that when $\min \{\mathrm{ham}(A_{i*},B_{*j}),\ q-\mathrm{ham}(A_{i*},B_{*j})\}=O(C_{ij})$
and $\epsilon$ is sufficiently small,
$C''_{ij}$ in Theorem \ref{theo: appmm} is a close approximation of $C_{ij}.$
See also Table 1
% \ref{table: one}
for comparison with other approximation methods.

\begin{table*}[t]
\begin{center}
\begin{tabular}{||c|c|c||} \hline \hline
additive error bound & time complexity & reference
\\ \hline \hline
  $\Omega(n/\sqrt c)$ &  $O(n^2c)$ & Drineas et al. \cite{DKM06}
 \\ \hline
$O((\sum_kA_{ik})^{1/2} (\sum_k B_{kj})^{1/2}/\sqrt c)$ & $O(n^2c)$ & Sarl\'os \cite{S06}
\\ \hline
  $||AB||_F/\sqrt c$ & $\tilde{O}(n(n+c))$ & Pagh \cite{P13}, Kutzkov \cite{K13}
  \\ \hline
  $\min \{\mathrm{ham}(A_{i*},B_{*j}),\ n-\mathrm{ham}(A_{i*},B_{*j})\}/\sqrt c$&
    $\tilde{O}(n^2c\log_{1+\frac 1{\sqrt c}} n)$   &               Theorem \ref{theo: appmm}, $\epsilon=\frac 1{\sqrt c}.$
\\ \hline \hline
\end{tabular}
\label{table: one}
\vskip 0.5cm
\caption{Individual bounds on the additive error of each entry in an approximation of the matrix product of $n\times n$ 0-1 matrices $A$ and $B$.}
\end{center}
\end{table*}

\section{An output-sensitive algorithm for an MST in $\Sigma^d$}

Throughout this section, we shall assume that $\Sigma$ is an alphabet
of $O(1)$ size. Furthermore, for a $k\times \ell$ $\Sigma$ matrix $D$,
we shall denote by $M_D$ the cost of a minimum spanning tree of its
rows in the generalized Hamming space $\Sigma^d.$

Our output-sensitive algorithm for a minimum spanning tree in the
generalized Hamming space $\Sigma^d$ is a consequence of the fact that
the Hamming distance matrix product of two $\Sigma$ matrices $A,\ B$
of sizes $p\times q$ and $q\times r$ w.h.p. can be computed in time
$\tilde{O}((p+r)(p+q+r)+\min\{rM_A,\ pM_{B^{\top}}\})$. This fact given in
Theorem \ref{theo: sigmadist} is combined with a standard algorithm
for minimum spanning tree in an edge weighted graph.

In the special case $\Sigma=\{0,1\}$, Theorem \ref{theo: sigmadist}
can be obtained by composing a similar theorem for the arithmetic
product of 0-1 matrices with the reduction given in
Theorem  \ref{theo: red} (1).  Such a similar
theorem can be obtained by a straightforward generalization of Theorem
3.12 in \cite{FJLL18} from square matrices to rectangular ones and the
observation that the geometric distance used in \cite{FJLL18} in case
of 0-1 vectors is just the extended Hamming distance \cite{GL03}
which never exceeds the Hamming one (see also Final remarks).

We provide a direct proof for the general case of $\Sigma$
in terms of the Hamming distances and $\Sigma$
matrices.
\junk{The direct proof shaves off a polylogarithmic factor by
  using Theorem \ref{theo:  approxmst} instead of Fact \ref{fact: mst}}
It relies on Corollary \ref{cor:g gapproxmst} and the following procedure.

\bigskip

\noindent
{\bf procedure $MMST(A,B)$}
\par
\noindent
    {\em Input:} Two $\Sigma$ matrices $A$ and $B$, of sizes $p\times q$ and $q\times r$,
    respectively.
\par
\noindent
    {\em Output:} The Hamming distance matrix product $D$ of $A$ and $B$,
    i.e., for $1\le i\le p$ and $1\le j\le r,$ $D_{ij}$ is the Hamming distance
    between $A_{i*}$ and $B_{*j}.$
\par
\noindent
\begin{enumerate}
\item Construct an $O(1)$-approximate
  spanning tree $T$ of the rows $A_{i*}$ of $A$,
  $i\in [p]$,
  in the generalized Hamming space $\Sigma^q$ by using the method of Corollary \ref{cor:g gapproxmst}.
\item
 Construct a traversal 
(i.e., a non-necessarily simple path visiting all vertices) $U$ of $T.$
\item For any pair $A_{k*}$, $A_{i*}$,
where the latter row follows the former in
the traversal $U,$ compute the set $diff(k,i)$
of indices $\ell \in [q]$, where $A_{i\ell}\neq A_{k\ell}$.
\item
For $j=1,\ldots,r$, iterate the following steps:
\begin{enumerate}
\item Compute $D_{sj}$ where $A_{s*}$ 
is the row of $A$ from which the traversal $U$ of $T$ starts.
\item While following $U$,
iterate the following steps:
\begin{enumerate}
\item
Set $k,\ i$ to the indices
of the previously traversed row of $A$ and
the currently traversed row of $A$, respectively.
\item
Set $D_{ij}$ to $D_{kj}$.
\item For each $\ell \in diff(k,i)$,
  if $A_{i\ell}=B_{\ell j}$ then set
  $D_{ij}$ to $D_{ij}-1$ and if  $A_{k\ell}=B_{\ell j}$ then set
  $D_{ij}$ to $D_{ij}+1$.
\end{enumerate} 
\end{enumerate} 
\end{enumerate}
\junk{
For a $k\times \ell$ matrix 0-1,
%$D,$ let $D^{\top}$ denote its transpose,
let $M_D$ stand for the cost of minimum spanning tree of its rows
in the Hamming space $\{ 0,\ 1\}^{\ell}.$
}

\begin{theorem}\label{theo: sigmadist}
  %Consider an alphabet $\Sigma$ of size $O(1).$
  Let $A,\ B$ be two 
$\Sigma$ matrices of sizes $p\times q$ and $q\times r,$ respectively.
%Next, let $M_A$ be the cost of a minimum spanning tree
%of the rows of $A$ in the Hamming space $\{0,\ 1\}^q$ and let $M_B$
%be the cost of a minimum spanning tree of the columns
%of $B$ in the Hamming space $\{0,\ 1\}^q. $
The Hamming distance matrix
product of $A$ and $B$ (i.e., for $1\le i \le p$ and $1\le j \le r,$
the Hamming distance between $A_{i*}$ and $B_{*j}$)
can be computed 
by a randomized algorithm
in time $\tilde{O}((p+r)(p+q+r)+ \min \{rM_A,\ pM_{B^{\top}}\} )$ with high probability.
\end{theorem}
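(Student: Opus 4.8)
The plan is to establish that the procedure $MMST(A,B)$ correctly computes the Hamming distance matrix product within the claimed time bound, and then to obtain the $\min$ in the running time by a transposition argument.

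First I would prove correctness by induction along the traversal $U$. The base case is step 4(a): $D_{sj}$ is computed directly as $\mathrm{ham}(A_{s*},B_{*j})$ in $O(q)$ time. For the inductive step, suppose $D_{kj}=\mathrm{ham}(A_{k*},B_{*j})$, where $A_{k*}$ is the row preceding $A_{i*}$ in $U$. Since $\mathrm{ham}(A_{i*},B_{*j})$ counts the positions $\ell$ with $A_{i\ell}\neq B_{\ell j}$, and $A_{k*},A_{i*}$ agree outside $diff(k,i)$, the two distances differ only in the contributions of positions $\ell\in diff(k,i)$. For each such $\ell$ (where $A_{i\ell}\neq A_{k\ell}$) there are three cases: if $A_{i\ell}=B_{\ell j}$ then $A_{k\ell}\neq B_{\ell j}$, so the contribution drops from $1$ to $0$ and step 4(b)(iii) decrements $D_{ij}$; if $A_{k\ell}=B_{\ell j}$ then symmetrically the contribution rises from $0$ to $1$ and the step increments $D_{ij}$; and if $B_{\ell j}$ equals neither value (possible only when $|\Sigma|>2$) the contribution is $1$ before and after, and neither branch of step 4(b)(iii) fires. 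Hence $D_{ij}$ is maintained correctly.

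Next I would bound the running time. Step 1 costs $\tilde O(p(p+q))$ by Corollary \ref{cor:g gapproxmst}, and yields a spanning tree $T$ whose cost is within an $O(\log_2|\Sigma|+1)=O(1)$ factor of $M_A$. Step 2 produces a depth-first (Euler-type) traversal of $T$ in $O(p)$ time; crucially, consecutive vertices in $U$ are joined by a tree edge and each edge of $T$ is traversed at most twice, so $\sum_{(k,i)\in U}|diff(k,i)| \le 2\,\mathrm{cost}(T)=O(M_A)$. Step 3 computes all the sets $diff(k,i)$ by pairwise comparison in $O(pq)$ time. In step 4, for each of the $r$ columns I spend $O(q)$ on the initial distance, $O(p)$ traversing $U$, and $O(\sum_{(k,i)}|diff(k,i)|)=O(M_A)$ on the increments, giving $O(r(p+q+M_A))$ in total. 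Summing, $MMST$ runs in $\tilde O(p^2+pq+pr+rq+rM_A)=\tilde O((p+r)(p+q+r)+rM_A)$.

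Finally, to obtain $\min\{rM_A,pM_{B^\top}\}$ I would invoke symmetry: applying the same procedure to $B^\top$ and $A^\top$ computes the transpose of the distance matrix product in time $\tilde O((p+r)(p+q+r)+pM_{B^\top})$, building the approximate spanning tree on the columns of $B$ instead. To select the cheaper branch without knowing $M_A,M_{B^\top}$ exactly, I would first compute approximate spanning trees of the rows of $A$ and of the columns of $B$ (both already charged to the first term), use their costs — each within an $O(1)$ factor of $M_A$ and $M_{B^\top}$ respectively — to compare $rM_A$ against $pM_{B^\top}$, and then run only the selected branch. The main obstacle is the correctness bookkeeping of the incremental update in step 4(b)(iii), in particular the $|\Sigma|>2$ case where a position differing between consecutive rows may match neither symbol against $B_{\ell j}$; a secondary delicate point is the factor-of-two Euler-tour charging of $\sum|diff|$ against $\mathrm{cost}(T)$, which bounds the output-sensitive term only because $\mathrm{cost}(T)=O(M_A)$ for $|\Sigma|=O(1)$.
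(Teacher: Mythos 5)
Your proof is correct and follows essentially the same route as the paper: it analyzes $MMST$, charges $\sum_{(k,i)}|diff(k,i)|$ against the $O(1)$-approximate spanning tree of Corollary \ref{cor:g gapproxmst} via the twice-per-edge traversal, and verifies the incremental update including the $|\Sigma|>2$ case (which the paper merely asserts as "the correctness of the updates in Step 4(b)"). The only difference is how the minimum is realized: the paper runs $MMST(A,B)$ and $MMST(B^{\top},A^{\top})$ in parallel and stops when either completes, whereas you select a single branch by comparing the two approximate tree costs --- both mechanisms yield the claimed $\tilde{O}((p+r)(p+q+r)+\min\{rM_A,\ pM_{B^{\top}}\})$ bound.
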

\begin{proof}
  We run the procedure $MMST$ on the input matrices $A,\ B$
  and also on the matrices $B^{\top},\ A^{\top},$ in parallel, and stop whenever
  any of the calls is completed. Note that if $D$
  is the Hamming distance product of $A$ and $B$
  then $D^{\top}$ is the Hamming distance product of $B^{\top}$ and $A^{\top}.$
  The correctness of the procedure follows from the correctness of
  the updates of $D_{ij}$ in the block of the inner loop, i.e.,
  in Step 4(b).

  Step 1 of $MMST(A,B)$ takes
  $O(\log (pq) (pq\log q  +p^2\log \log q ))$
  %$O((p^2\log \log q  +pq \log (pq)\log q)$
  time by Corollary \ref{cor:g gapproxmst}.
  Step 2 can be done in $O(p)$ time while Step 3 requires
  $O(pq)$ time. The first step in the block under
  the outer loop, i.e., computing
  $D_{sj}$ in Step 4(a), takes $O(q)$ time.
  The crucial observation is that the second step
  in this block, i.e., Step 4(b), requires $O(p+M_A)$ time.
  Simply, the substeps (i), (ii) take $O(1)$ time while
  the substep (iii) requires $O(|diff(k,i)|+1)$ time.
  Since the block is iterated $r$ times, the whole
  outer loop, i.e., Step 4, requires $O(qr+pr+rM_A)$ time.
  Thus, $MMST(A,B)$ can be implemented in time
  $O(\log (pq) (pq\log q  +p^2\log \log q )+qr+pr+rM_A)$
%$O((p^2\log \log q +pq \log (pq)\log q +qr+pr+rM_A)$.
 
  Symmetrically, $MMST(B^{\top},A^{\top})$ can be done in time
  $O(\log (qr) (qr\log q  +r^2\log \log q )+pq+pr+pM_{B^{\top}})$.
  %in $O((r^2\log \log q  +qr \log (qr)\log q +pq+pr+pM_{B^{\top}})$ time.
  \qed
  \end{proof}
\junk{
\begin{fact} \label{fact: first}
Let $A,\ B$ be two 
0-1 matrices of sizes $p\times q$ and $q\times r,$ respectively.
%Next, let $M_A$ be the cost of a minimum spanning tree
%of the rows of $A$ in the Hamming space $\{0,\ 1\}^q$ and let $M_B$
%be the cost of a minimum spanning tree of the columns
%of $B$ in the Hamming space $\{0,\ 1\}^q. $
The arithmetic matrix
product of $A$ and $B$
can be computed 
by a randomized algorithm
in $\tilde{O}(pq+qr+ pr+ \min \{rM_A,\ pM_{B^{\top}}\} )$ time with high probability.
\end{fact}}

\begin{corollary}
  %Suppose $\Sigma$ is an alphabet of size $O(1).$
  Let $S$ be a set of $n$ points in the generalized Hamming space $\Sigma^d$ and
  let $M$ be the cost of a minimum spanning tree of $S$
  in $\Sigma^d$.
  A minimum spanning tree of $S$ can be computed by a randomized algorithm
  in $\tilde{O}(n(d+n+M))$ time with high probability.
\end{corollary}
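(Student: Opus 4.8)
The plan is to reduce the minimum spanning tree computation to the output-sensitive all-pairs Hamming distance computation of Theorem \ref{theo: sigmadist}, in exactly the manner used to derive Corollary \ref{cor: gmst}, but replacing the matrix-multiplication-based distance routine by the algorithm $MMST$. First I would form the $n\times d$ $\Sigma$ matrix $P$ whose $i$-th row is the coordinate vector of the point $p_i$, and then invoke Theorem \ref{theo: sigmadist} on the pair $A=P$ (of size $n\times d$) and $B=P^{\top}$ (of size $d\times n$). The resulting Hamming distance matrix product gives, for every $i,j\in[n]$, the Hamming distance $\mathrm{ham}(p_i,p_j)$ between the points of $S$.

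The crucial step is to identify the output-sensitivity parameter correctly. Substituting $p=n$, $q=d$, $r=n$ into the running time $\tilde{O}((p+r)(p+q+r)+\min\{rM_A,\ pM_{B^{\top}}\})$ of Theorem \ref{theo: sigmadist}, I would observe that $B^{\top}=(P^{\top})^{\top}=P=A$, so that both $M_A$ and $M_{B^{\top}}$ equal $M_P$. But the rows of $P$ are precisely the input points of $S$, hence $M_P$ is exactly the cost $M$ of a minimum spanning tree of $S$ in $\Sigma^d$. Therefore $\min\{rM_A,\ pM_{B^{\top}}\}=nM$ and the bound specializes to $\tilde{O}(n(n+d)+nM)=\tilde{O}(n(d+n+M))$.

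Having computed all pairwise distances, I would finish by running a standard linear-time minimum spanning tree algorithm (Prim's) on the complete graph on $[n]$ where the weight of edge $(i,j)$ is the computed value $\mathrm{ham}(p_i,p_j)$. This graph has $\binom{n}{2}$ edges, so this final stage costs $O(n^2)$ time, which is absorbed into the $\tilde{O}(n(d+n+M))$ term. The correctness is immediate, since the output of $MMST$ is the exact Hamming distance matrix (w.h.p.), so Prim's algorithm returns a genuine minimum spanning tree of $S$.

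I do not anticipate a real obstacle here: the statement is essentially a direct specialization of Theorem \ref{theo: sigmadist} together with a routine MST step. The only point requiring a moment's care — and the one I would state explicitly — is the identification $M_P=M$, i.e.\ that the output parameter appearing in the running time of the distance routine is exactly the quantity $M$ that the corollary advertises. Once that is noted, the time bound follows by substitution, and nothing further needs to be verified beyond the standard $O(n^2)$ cost of computing the tree from the distance matrix.
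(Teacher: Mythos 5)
Your proposal is correct and follows essentially the same route as the paper: apply Theorem \ref{theo: sigmadist} with $A=P$ and $B=P^{\top}$ to obtain all pairwise Hamming distances in $\tilde{O}(n(d+n+M))$ time, then run a standard MST algorithm on the weighted clique in $O(n^2)$ time. Your explicit verification that $M_A=M_{B^{\top}}=M_P=M$ is exactly the identification the paper leaves implicit, so nothing is missing or different in substance.
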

\begin{proof}
  \junk{Similarly as in Section 3, form a $m\times d$ matrix
  $P$, where for $i\in {m},$ the $i$-th row is the vector
  of coordinates of the $i$-th input point.
  By Theorem \ref{theo: allpairsham},
  the problem of computing the Hamming distances between all pairs
  of points in $S$ reduces in $O(md)$ to that of computing
  the arithmetic matrix product $PP^{\top}$ and $P\bar{P}^{\top},$
  where $\bar{P}$ is obtained from $P$ by flipping $0$ entries
  to $1$ and {\rm vice versa}.}
  We can compute the Hamming distances between all pairs
  of points in $S$ in   $\tilde{O}(n(d+n+M))$ time
  by Theorem \ref{theo: sigmadist}.
    Then, it is sufficient to run a standard algorithm
  for minimum spanning tree on vertices $1$ through $n$
  corresponding to the points in $S$ and edges weighted
  by the Hamming distances between the points in $S$
  corresponding to the edge endpoints. This final step takes
  $O(n^2)$ time.
  \qed
\end{proof}

For example, if $n=d,$ our output-sensitive algorithm
for a minimum spanning tree is substantially faster than that
of Corollary \ref{cor: gmst} when $M$ is substantially smaller
than $n^{\omega -1}.$

\section{Approximate $\ell$-clustering in high dimensional
  Hamming spaces}

The $\ell$-center clustering problem in a Hamming space
 $\{0,1\}^d$ is as follows: given a set $P$ of $n$ points
 in  $\{0,1\}^d$, find a set $T$ of $\ell$ points in  $\{0,1\}^d$
 that minimize $\max_{v\in P} \min_{u\in T} \mathrm{ham}(v,u).$

The minimum-diameter $\ell$-clustering problem in a Hamming space
 $\{0,1\}^d$ is as follows: given a set $P$ of $n$ points
 in  $\{0,1\}^d$, find a partition of $P$ into $\ell$ subsets
 $P_1,P_2,\dots, P_{\ell}$ that minimize
 $\max_{i\in [\ell]} \max_{v,u \in P_i} \mathrm{ham}(v,u).$

 The $\ell$-center clustering problem could be
 also termed as the minimum-radius $\ell$-clustering problem.
 Both problems are known to be NP-hard
 to approximate within $2-\epsilon$ 
 in metric spaces \cite{Gon85,HS85}.
 \junk{Gonzalez provided a simple $2$-approximation method
 for $\ell$-center clustering that yields also a $2$-approximation
 for minimum-diameter $\ell$-clustering \cite{Gon85}.}
Gonzalez' simple $2$-approximation method for
$\ell$-center clustering yields also a $2$-approximation
 for minimum-diameter $\ell$-clustering \cite{Gon85}.
 It picks an arbitrary input point as the first
 center and repeatedly extends the current center set 
 by an input point that maximizes the distance
 to the current center set, until $\ell$ centers are found.
 In case of $\{0,1\}^d$ Hamming space, his method takes
 $O(nd\ell)$ time, where $n$ is the number of input points.
\junk{
 On a word RAM with computer words of length $w$, the factor $w$ can
 be shaved off. For low dimensional Euclidean spaces of bounded
 dimension, and more generally, for metric spaces of bounded doubling
 dimension, there exist faster $2$-approximation algorithms for the
 $\ell$-center problem with hidden exponential dependence on the
 dimension in their running time, see \cite{FG88} and \cite{HM06},
 respectively.}
   
 By forming for each of the $\ell$ centers, the cluster
consisting of all input points for which this center
is the closest one (with ties solved arbitrarily),
one obtains an $\ell$-clustering with the maximum
cluster diameter within two of the minimum \cite{Gon85}.

In this section, we provide a faster $(2+\epsilon)$-approximation
method for both problems in Hamming spaces of super-logarithmic dimension
when $\ell=\omega (\log n)$ that combines Gonzalez' method
with the randomized dimension reduction described in
Lemma \ref{lem: R}.
\bigskip

    \noindent
        {\bf procedure} $CENTER(\ell, P, \delta )$
        \par
        \noindent
            {\em Input}: A positive integer $\ell,$
a set $P$ of points $p_1,\dots,p_n \in \{0,1\}^d,$ $n>\ell,$
and a real $\delta \in (0,\frac 12).$
            \par
            \noindent
        {\em Output}: An $\ell$-center clustering $T$ of $P.$

                \begin{enumerate}
\item
                  Set $n$ to the number of input points
                 and $k$ to $O(\log n /\delta^2)$.
                 \item
                 Generate random $k\times d$     0-1 matrices 
  $F_t$ for the thresholds $t=1,\ (1+\delta),\dots, (1+\delta)^{\lceil \log _{1+\delta} d\rceil },$
                  defining the functions $f_t$ by $f_t(x)=F_tx^{\top}\mod 2$
                  (see Remark \ref{rem: 1}).
                \item
                  Compute the values of all the functions $f_t$  for each
                  point $p_i\in P,$
                  i.e.,
                  for $i \in [n],$
                  compute $F_tp_i^{\top}\mod 2$ for
  $t=1,\ (1+\delta),\dots, (1+\delta)^{\lceil  \log _{1+\delta}d \rceil}.$
\item
Set $T$ to $\{ p_1 \}$, and for $j\in [n] \setminus \{ 1 \}$,
  set $W_{1j}$  to the smallest $t\in \{
1,\ (1+\delta),\dots, (1+\delta)^{\lceil \log _{1+\delta}d\rceil } \}$
such that $\mathrm{ham}(f_t(p_1),f_t(p_j))\le (C_1+\frac {\delta}
{30})k$ (see Lemma \ref{lem: R}).
\item
$\ell -2$ times iterate the following three steps:
\begin{enumerate}
\item
Find $p_m \in P\setminus T$ that maximizes $\min_{p_q\in T}
W_{qm}$ and extend $T$ to $T\cup \{p_m\}.$
\item
For each $p_j\in P\setminus T,$ set $W_{mj}$
to the smallest $t\in \{
1,\ (1+\delta),\dots, (1+\delta)^{\lceil \log _{1+\delta}d\rceil } \}$
such that $\mathrm{ham}(f_t(p_m),f_t(p_j))\le (C_1+\frac {\delta}
{30})k.$
\item
For each $p_j\in P\setminus T,$ update $\min_{p_i \in T} W_{ij}.$
\end{enumerate}
\item
Find $p_m \in P\setminus T$ that maximizes $\min_{p_q\in T}
W_{qm}$ and extend $T$ to $T\cup \{p_m\}.$
\item
Return $T.$

 \end{enumerate}

  \begin{lemma}\label{lem: timecenter}
    $CENTER(\ell,P,\delta)$ runs in time\\
    $O(n \log n (d \log_{1+\delta} d + \ell \log
\log_{1+\delta} d)/\delta^2)$.
  %$O((nd\log n \log_{1+\delta} d)/\delta^2+
  %(n\ell \log n\log \log_{1+\delta} d )/\delta^2)$ time.
\end{lemma}
\begin{proof}
  Step 1 can be done $O(nd)$ time. Step 2 takes $O(dk\log_{1+\delta} d)\
  =$\\
$O((d\log n \log_{1+\delta} d)/\delta^2)$
  time. The preprocessing in Step 3 requires \\
$O(ndk\log_{1+\delta} d)$, i.e., 
$O((nd\log n \log_{1+\delta} d)/\delta^2)$ time.
Step 4 can be done in\\
 $O((n\log n \log \log_{1+\delta}d)/\delta^2 )$ time by binary
search. Steps 5(a) and 5(c) take $O(n)$ time while Step 5(b) as
Step 4 can be done in $O((n\log n \log \log_{1+\delta}d) /\delta^2)$ time by binary search.
Consequently, the whole Step 5 requires $O((n\ell \log n
\log \log_{1+\delta}d)/\delta^2 )$
time. Finally, Step 6 takes $O((n\log n \log \log_{1+\delta}d )/\delta^2)$ time
similarly as Steps 4 and 5(b). It remains to observe that 
the overall running time of $CENTER(\ell,P,\delta)$
is dominated by those of Step 3 and Step 5.
 \end{proof}

 \begin{theorem}
   Let $P$ be a set of $n$ points $p_1,\dots,p_n \in \{0,1\}^d,$
   $\ell$ an integer smaller than $n,$ and let $\epsilon\in (0,1/2).$
   $CENTER(\ell,P,\epsilon/5)$
   provides a $2+\epsilon$
   approximation of an optimal $\ell$-center clustering in time
    $O(n \log n (d \log_{1+\epsilon/5} d + \ell \log
\log_{1+\epsilon/5} d)/\epsilon^2).$
  % $O((nd\log n\log_{1+\epsilon /5} d)/\epsilon^2+
%(n\ell\log n \log \log_{1+\epsilon/5} d )/\epsilon^2)$
   It also yields a
   $2+\epsilon $ approximation to an $\ell$-clustering of $P$ with
   minimum cluster diameter.
 \end{theorem}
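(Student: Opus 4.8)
The plan is to adapt the classical analysis of Gonzalez' farthest-point heuristic, replacing the exact Hamming distances it uses by the approximate surrogate values $W_{ij}$ computed inside $CENTER$. First I would establish that, w.h.p., every $W_{ij}$ touched by the procedure is a $\delta$-approximation of $\mathrm{ham}(p_i,p_j)$, that is $W_{ij}/(1+\delta)\le \mathrm{ham}(p_i,p_j)\le (1+\delta)W_{ij}$. This is exactly the guarantee of Lemma \ref{lem: ineq}, since the $W_{ij}$ set in Step 4 and Step 5(b) are defined by the same rule as in $APPAPHAM$; with $k=O(\log n/\delta^2)$ a union bound over the $O(n^2)$ relevant pairs makes all these estimates hold simultaneously w.h.p.\ in terms of $n$.

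Next I would reproduce Gonzalez' argument on the surrogate distances. Let $c_1,\dots,c_\ell$ be the centers in the order they are added, let $\rho_i=\min_{j<i}W_{c_j,c_i}$ be the $W$-value at which $c_i$ joined $T$, and let $v^\ast$ be the point maximizing $\min_{c\in T}W_{c,v}$, with value $\rho=\min_{c\in T}W_{c,v^\ast}$. The crucial point is that the sequence $\rho_2\ge \rho_3\ge\cdots\ge\rho_\ell\ge\rho$ is non-increasing; this follows purely from the greedy max--min selection rule and does \emph{not} require $W$ to obey the triangle inequality, so it survives the use of approximate distances. Consequently the $\ell+1$ points $c_1,\dots,c_\ell,v^\ast$ are pairwise at $W$-distance at least $\rho$, hence, by the approximation bound applied in one direction, pairwise at true Hamming distance at least $\rho/(1+\delta)$.

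The heart of the argument is then a pigeonhole: among these $\ell+1$ points two must fall in a common cluster of an optimal solution. For the $\ell$-center (radius) version this forces $\rho/(1+\delta)\le 2r^\ast$, where $r^\ast$ is the optimal radius; and since every point $v$ satisfies $\min_{c\in T}W_{c,v}\le\rho$ and therefore $\min_{c\in T}\mathrm{ham}(v,c)\le(1+\delta)\rho$, the true radius achieved by $T$ is at most $2(1+\delta)^2 r^\ast$. For the minimum-diameter version the same two points lie in a common optimal cluster of diameter $d^\ast$, giving $\rho/(1+\delta)\le d^\ast$; assigning each point to its nearest (in $W$) center and invoking the triangle inequality for the genuine Hamming metric bounds each cluster's true diameter by $2(1+\delta)\rho\le 2(1+\delta)^2 d^\ast$. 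Substituting $\delta=\epsilon/5$ and using $\epsilon<1/2$, a short computation gives $2(1+\epsilon/5)^2\le 2+\epsilon$ in both cases, which yields the claimed factor; the running time is immediate from Lemma \ref{lem: timecenter} with $\delta=\epsilon/5$.

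The main obstacle I anticipate is precisely the bookkeeping of where the true metric is needed versus where only the surrogate ordering matters. The monotonicity of the $\rho_i$'s and the pairwise $W$-lower bounds must be argued from the selection rule alone, because $W$ need not satisfy the triangle inequality, whereas the triangle inequality may be invoked only for the honest Hamming distance when bounding the diameter of the induced clusters. Keeping these two layers cleanly separated, and threading the one-sided approximation inequalities through in the correct direction at each place, is the delicate part of the proof.
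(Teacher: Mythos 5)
Your proposal is correct and follows essentially the same route as the paper's proof: establish that the $W_{ij}$ values inherit the $\delta$-approximation guarantee of Lemma \ref{lem: ineq}, show via the greedy max--min rule that the $\ell$ centers plus the farthest remaining point form $\ell+1$ points pairwise at $W$-distance at least $r_w$ (hence Hamming distance at least $r_w/(1+\delta)$), apply pigeonhole against an optimal solution, and conclude with the bound $2(1+\epsilon/5)^2\le 2+\epsilon$ for both the radius and diameter versions. Your explicit verification that the monotonicity of the greedy values needs no triangle inequality for $W$ is a point the paper leaves implicit ("by the specification of the procedure"), but it is the same argument, not a different one.
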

 \begin{proof}

   Let $\delta=\epsilon/5$ and let $T$ be the set of
   $\ell$ centers returned by $CENTER(\ell,P,\delta)$.
   Next, let $r=\max_{v\in S} \min_{u\in T} \mathrm{ham}(v,u)$,
   $r_w=\max_{p_i\in P}\min_{p_j\in T} W_{ij}$, and let $p_q$
 be a point for which the latter maximum is achieved.
 Since $W_{ij}$ are defined analogously as
 in the procedure $APPAPHAM$,
 it follows from Lemma \ref{lem: ineq} that $r_w\ge r/(1+\delta).$
 By the specification of the
 procedure $CENTER$ and the definition of $p_q,$
 the set $T\cup \{p_q\}$ consists of $\ell+1$ points such that for any pair
 $p_v,\ p_u$ of points in this set $W_{vu}\ge r_w$ holds.
 Consequently, these $\ell+1$
 points are at the Hamming distance at least $r_w/(1+\delta)$
 apart by Lemma \ref{lem: ineq}. Let $T^*$ be an optimal set of $\ell$ center
 points.
 Two of these $\ell+1$ points in  $T\cup \{p_q\}$
 must have the same nearest center in $T^*$.
 It follows that the Hamming distance of at least one of these two points
 to its nearest center
 in $T^*$ is at least $\frac {r_w}{2(1+\delta)}.$
 Since $r_w\ge r/(1+\delta)$ by Lemma \ref{lem: ineq},
 we infer that $\max_{p_i\in P} \min_{p_j\in T^*} \mathrm{ham}(p_i,p_j)$
 is at least  $\frac {r}{2(1+\delta)^2}.$
 It remains to note that $2(1+\delta)^2\le 2+\epsilon$
 by $\delta =\epsilon /5.$
 This combined with Lemma \ref{lem: timecenter}
completes the proof of the first part.

We can slightly modify  $CENTER(\ell,P,\delta)$
so it returns a partition of $P$
into clusters $P_i$, $i\in [\ell],$ each consisting of all points
in $P$ for which the $i$-th center is closest in terms
of the approximate $W_{ij}$ distances. Note that the maximum
diameter of a cluster in this partition is at most
$2r_w(1+\delta).$ To implement the modification, we let to
update in Step 5(c) not only $\min_{p_i \in T} W_{ij}$ but
also the current center $p_i$ minimizing  $\min_{p_i \in T} W_{ij}$.
This slight modification does not alter the asymptotic complexity
of  $CENTER(\ell,P,\delta)$.

The proof of the second part follows by a similar argumentation.
Consider the $\ell +1$ points defined in the proof of the first part.
Recall that they are at least  $r_w/(1+\delta)$ apart.
 Two of the $\ell+1$ points have also to belong to the same cluster
 in a $\ell$-clustering that minimizes the diameter. It follows
 that the minimum diameter is at least $r_w/(1+\delta)$
 while the diameter of the clusters in the $\ell$-center
 clustering returned by $CLUSTER(\ell,P,\delta)$
 is at most  $2r_w(1+\delta)$ by Lemma \ref{lem: ineq}.
 Consequently, it is larger by at most $\frac {2r_w(1+\delta)}
 {r_w/(1+\delta)} \le 2 +\epsilon$ than the optimum.
 \qed
\end{proof}

\section{Final remarks}

In order to extend also the results of Section 4 to include the
generalized Hamming space $\Sigma^d$ one needs to extend Lemma
\ref{lem: R}. For this purpose, presumably one can use the ideas of
the corresponding extension of the results on approximate nearest
neighbors search in Hamming spaces outlined in Section 4 in \cite{KOR00}.

The extended Hamming distance ($eh(\ ,\ )$)
                     introduced in \cite{GL03} is a generalization of
                     the Hamming distance to include blocks of zeros
                     and ones. It is defined recursively for two 0-1
                     strings (alternatively, vectors or sequences)
                     $s=s_1 s_2\dots s_m$ and $u=u_1 u_2\dots u_m$ as
                     follows:
                     $$eh(s,u)=eh(s_{\ell +1}\dots s_m,u_{\ell +1}\dots u_m)+
                     (s_1+u_1)\mod 2,$$
                     where $\ell$ is the maximum number such that
                     $s_j=s_1$ and $u_j=u_1,$ for $j=1,\dots,\ell.$
                     Note that the extended Hamming distance between
                     any two strings or vectors never exceeds the
                     Hamming one. The authors of \cite{GL03} provide
                     a linear transformation of the input 0-1 vectors that reduces
                     the computation of 
                     the extended Hamming distance
                     between them to the computation of the Hamming distance
                     between the transformed vectors. Hence, our results
                     in terms of the Hamming distance in hypercubes can be generalized
                     to include the extended Hamming distance.
\junk{
                     Another line of extension of our results is to
                     consider the generalized Hamming space
                     $\Sigma^d$, where $\Sigma$ is fined alphabet
                     \cite{A18,KOR00}. In fact, Arslan reduced the
                     problem of computing all-pairs Hamming distances
                     in $\Sigma^d$ to $O(|\Sigma|)$ arithmetic products of 0-1
                     matrices in \cite{A18}, so Section 3 can be directly
                     extended to include the generalized Hamming
                     space. In case of Section 4 and consequently
                     Section 6, one needs to extend Fact \ref{fact:
                       R}. For this purpose, one can use  ideas of
                     corresponding extension of results on approximate
                     nearest neighbors search in Hamming spaces outlined
                     in Section 4 in \cite{KOR00}.}
                     
     \small
  \bibliographystyle{abbrv}                    
\bibliography{Voronoi3}
%\vfill
%\newpage
\appendix
\section{Proof of Lemma \ref{lem: R}}
\noindent
Lemma \ref{lem: R} is similar to Lemma 2.3 in \cite{AIR18} and as that
based on arguments from \cite{KOR00}.
Unfortunately, no proof
of Lemma 2.3 in \cite{AIR18} seems to be available in the literature.

\begin{proof}
To define the map $f,$ generate a random $k\times d$ 0-1 matrix $F,$
  where each entry is set to $1$
  with probability $\frac 1 {4t}$. For $x\in \{ 0,1\}^d,$ let $f(x)=Fx^{\top}$ over the field $GF(2).$

  For $x,y \in \{0,1\}^d,$  consider the $i$-th coordinates $f^i(x),\ f^i(y)$ of $f(x)$ and $f(y),$
  respectively. Suppose that $\mathrm{ham}(x,y)=D.$

  To estimate the probability that $f^i(x)\neq f^i(y)$
  when $D>0,$ observe that for $z\in \{0,1\}^d,$ $f^i(z)$ can be equivalently obtained as follows.
 Pick a  random subset $S$ of the $d$ coordinates such that each coordinate is selected with probability
 $\frac 1 {2t}.$ Let $\bar{z}$ be the vector  in $\{0,1\}^d$ obtained by setting all coordinates
 of $z$ not belonging to $S$ to zero. Now the inner mod 2 product of $\bar{z}$ with
 a vector  $r\in \{0,1\}^d$ picked  uniformly at random yields $f^i(z).$

 It follows that the probability that $f^i(x)\neq f^i(y)$ is $\frac
 12(1-(1-\frac 1 {2t})^D).$ Simply,
 strictly following \cite{KOR00}, if none of
 the $D$ coordinates $j,$ where $x_j\neq y_j,$ is chosen into
 $S$ then $f^i(x)=f^i(y),$ otherwise at least one of such coordinates,
 say $m$, is in $S$, and for each setting of other choices,there is
 exactly one of the two choices for $r_m$ which yields $f^i(x)\neq f^i(y).$
 Observe that the probability is increasing in $D.$

  Consequently, if $\mathrm{ham}(x,y)\le t$ then the probability that
 $f^i(x)\neq f^i(y)$ does not exceed $\frac 12(1-(1-\frac 1 {2t})^t)\approx \frac 12 (1-e^{-1/2})$
 while when $\mathrm{ham}(x,y)\ge (1+\epsilon)t$ it is at least
 $\frac 12(1-(1-\frac 1{2t})^{(1+\epsilon)t})\approx 
\frac 12 (1-e^{-(1+\epsilon)/2})$.
 We set $C_1= \frac 12(1-(1-\frac 1 {2t})^t)$
 and $C_2=\frac 12 (1-(1-\frac 1{2t})^{(1+\epsilon)t}).$
  Hence, in the first case the expected value of $\mathrm{ham}(f(x),f(y))$ is
  at most $C_1k$ while in the second case it is at least $C_2k$.
  
Let us estimate
$C_2-C_1.$ By the monotonicity of the function
$(1-\frac{1}{2t})^{t}$ in $t \geq 1$, $C_2-C_1$ is at least
$$\frac{1}{2}\left[ \left( 1-\frac{1}{2t}\right)^{t}-
\left(1-\frac{1}{2t}\right)^{(1+\epsilon)t}\right] =
\frac{1}{2}\left(1-\frac{1}{2t}\right)^{t}\left(1-\left(1-\frac{1}
  {2t}\right)^{\epsilon t}\right) \geq
\frac{1}{4}(1-e^{-\frac{\epsilon}{2}}).$$
We expand $e^{-x}$ in the Taylor
series $\sum_{i=0}^{\infty}\frac{(-x)^i}{i!}$.
Since the sums of consecutive
pairs of odd and even components
are negative, we obtain:
$$C_2-C_1 \geq \frac{1}{4}\left(1-e^{-\frac{\epsilon}{2}}\right) \geq
\frac{1}{4}\left(1-\left(1-\frac{\epsilon}{2}+\frac{(\frac{\epsilon}{2})^2}{2}\right)\right) =
\frac{1}{4}\left(\frac{\epsilon}{2}-\frac{\epsilon^2}{8}\right).$$
Hence, since $\epsilon < \frac{1}{2}$, we infer:
$$C_2-C_1 \geq \frac{1}{4}\left(\frac{\epsilon}{2}-\frac{\epsilon^2}{8}\right) \geq
\frac{1}{4}\left(\frac{\epsilon}{2}-\frac{\epsilon}{16}\right)
= \frac{7 \epsilon}{64}\ge \frac  {\epsilon}{10}.$$

 We shall use the following 
Chernoff multiplicative bounds (see Appendix A in \cite{AS}).  For a
sequence of independently and identically distributed (i.i.d.) 0-1
random variables $X_1,X_2,\dots,X_m$, $Pr[\sum X_i >(p+\gamma)m] <
e^{-pm\gamma^2}$, and $Pr[\sum X_i <(p-\gamma)m] < e^{-2m\gamma^2},$
where $p=Pr[X_i=1]$.
%Thus, by the standard application of
%the Chernoff multiplicative bounds (see Appendix A in \cite{AS}),

By the Chernoff bounds,
if $\mathrm{ham}(x,y)\le t$ then the probability
 that $\mathrm{ham}(f(x),f(y))>  (C_1+\frac 1{30} \epsilon)k$
 is at most $e^{ -\frac 1{900} \epsilon^2 k}$.
 Similarly, if $\mathrm{ham}(x,y)\ge (1+\epsilon)t$
 %by the Chernoff bounds, if $\mathrm{ham}(x,y)\ge (1+\epsilon)t$
 then the probability that $\mathrm{ham}(f(x),f(y)) < (C_2-\frac 1{30} \epsilon)k$
 is at most $e^{ -\frac 2{900} \epsilon^2 k}$.
 
 Note that $C_2-\epsilon/30 > C_1+\epsilon/30$ by $C_2-C_1\ge  \frac  {\epsilon}{10}.$
 \qed
\end{proof}
\end{document}